\newtheorem{theorem}{Theorem}
\newtheorem{corollary}{Corollary}
\newtheorem{definition}{Definition}[section]
\newtheorem{lemma}{Lemma}
\DeclareMathOperator{\sign}{sign}
\title{Manipulating Elections by Changing Voter Perceptions}
\author{
Junlin Wu
\and
Andrew Estornell\and
Lecheng Kong\And
Yevgeniy Vorobeychik
\affiliations
Washington University in St. Louis
\emails
\{junlin.wu, aestornell, jerry.kong, yvorobeychik\}@wustl.edu
}
\begin{document}
\maketitle

\begin{abstract}
The integrity of elections is central to democratic systems.
However, a myriad of malicious actors aspire to influence election outcomes for financial or political benefit.
A common means to such ends is by manipulating perceptions of the voting public about select candidates, for example, through misinformation.
We present a formal model of the impact of perception manipulation on election outcomes in the framework of spatial voting theory, in which the preferences of voters over candidates are generated based on their relative distance in the space of issues.
We show that controlling elections in this model is, in general, NP-hard, 
whether issues are binary or real-valued.
However, we demonstrate that critical to intractability is the diversity of opinions on issues exhibited by the voting public.
When voter views lack diversity, and we can instead group them into a small 
number of categories---for example, as a result of political polarization---the election control problem can be solved in polynomial time in the number of issues and candidates
for arbitrary scoring rules.
\end{abstract}

\section{Introduction}

Elections are among the core functional elements of democratic systems.
Consequently, there is broad consensus that their integrity is among the top democratic priorities.
However, malicious actors may attempt to subvert elections for their own means, whether financial or political~\cite{Caldwell16,Harper19,Khetani-Shah19}.
A common approach for manipulating elections is by spreading false information about select candidates, an extreme example of which is the infamous ``Pizzagate'' campaign targeting Hillary Clinton in the 2016 U.S. presidential election~\cite{Robb17}.
Less extreme, but far more common, is the spread of misinformation about the positions of candidates on specific issues, such as taxation and debt.

The issue of election vulnerability to malicious manipulation has been studied in the computational social choice literature from a computational complexity perspective under the terms \emph{election control} (when the election structure itself is manipulated)~\cite{Bartholdi92,Hemaspaandra07,Chen17} and \emph{bribery} (when manipulation is through changing voter preferences over candidates)~\cite{Bredereck16,Faliszewski16}.

The traditional study of election control  takes voter preferences as given, while considerations of bribery investigate direct manipulations of preference rankings of individual voters.
However, neither is a natural model of the impact of misinformation \emph{about particular issues} on the perceptions of candidates by the voting public.
To address this gap, we propose a new model of election control in the \emph{spatial voting theory} framework.
Spatial voting theory explicitly captures voter and candidate positions on issues, with voter preferences over candidates determined by their relative distance in issue space~\cite{Anshelevich16,Anshelevich18,Enelow84}.
In our model of election manipulation, a malicious party can change voter perceptions of a target candidate on issues, subject to a budget constraint (more precisely, we constrain the $l_p$ norm of the manipulation to be below a specified bound).\footnote{This model can also be viewed as an example of bribery, in the sense that the manipulation affects voter preference rankings over candidates.  Our use of the term \emph{election control} is general, referring to any setting in which a malicious party wishes to subvert an election, whatever means they use for doing so.} We consider both constructive control, where the malicious goal is to cause the target candidate to win, and destructive control, in which the goal is to cause the target candidate to lose.

We show that when the issues are binary-valued, the problem is hard even with two candidates, for both forms of control and for any $l_p$ norm with integer $1 \le p < \infty$ used to measure distance in issue space. 
When issues are real-valued, however, the conclusions for constructive and destructive control differ slightly. For destructive control, the problem is hard even with two candidates.
For constructive control, we show hardness for plurality elections when the number of candidates is arbitrary, for $l_p$ norm with integer $1 < p \le \infty$. 
However, if there are only two candidates and we measure distance using $l_\infty$ norm, the control problem can be solved in polynomial time.
Furthermore, we show that if we restrict either the number of issues or the number of voters to be bounded by a constant, all control problems become tractable, whether issues are binary (for arbitrary $l_p$ norm) or real-valued (for $l_2$ and $l_\infty$), for arbitrary scoring rules used to determine election outcomes.
Moreover, we show that the tractability generalizes even when the number of voters is arbitrary, but their opinions on issues are limited to only a constant (that is, small) number of options.

These seemingly highly technical results offer a broader insight: vulnerability of elections to malicious manipulation of voter perceptions hinges on the extent to which voters exhibit a high diversity of political views.
When this is the case, elections are highly resistant to manipulation.
However, \emph{when voters are Balkanized into a small number of groups with effective uniformity of views within each, for example, due to political polarization, elections become easy to manipulate through misinformation}.

Our model of election control is related to several recent studies of election control in the spatial voting theory framework~\cite{Lu19,Estornell20}.
However, the means of manipulation in this closely related work is changing the relative importance of issues to voters, whereas our focus is on changing voter perceptions of candidates.

\paragraph{Related Work.}
The study of election control was initiated by \citeauthor{Bartholdi92}~\shortcite{Bartholdi92}, who studied the impact of adversarially adding, deleting, or partitioning candidates or voters on election outcomes in the constructive control framework.
Numerous follow-up efforts extended this analysis in a number of directions, such as destructive control~\cite{Hemaspaandra07}, a variety of voting rules and settings~\cite{Menton12,Erdelyi15,Chen17}, as well as when voter preferences can be modified (commonly called the \emph{bribery problem}~\cite{Bredereck16,Faliszewski16} or \emph{optimal lobbying}~\cite{Christian2007,Binkele2014}).

In most election control settings voter preferences are specified directly as preference rankings over the candidates.
An alternative approach based on spatial theory of voting, specifies voter and candidate positions on issues, with preference rankings then induced from relative distances between voter and candidate positions~\cite{Davis68,Enelow84,Anshelevich16,Anshelevich18}.
\citeauthor{Lu19}~\shortcite{Lu19} were the first to investigate election control within the spatial theory voting model, with the adversary's ability restricted to selecting a subset of issues that become the focus of voting preferences.
\citeauthor{Estornell20}~\shortcite{Estornell20} study a variation in which an adversary can modify the relative importance of issues in determining voter preferences over candidates.
Both are distinct from our model in which the adversary modifies not the importance of issues, but the \emph{perceptions} of a particular candidate by the voters.

Several models of election control are also motivated by the spread of misinformation about candidates on social networks~\cite{Wilder18,Castiglioni20}.
However, these focus on stochastic spread of misinformation in the social influence modeling framework~\cite{Kempe03}, but use the conventional model of elections in which voter preferences are directly specified, with misinformation having a direct impact on a target candidate's relative ranking for a given voter, rather than an indirect impact stemming from the change in perceived positions on issues, as in our model.

\section{Preliminaries}

We consider an election with a set of $n$ candidates $\mathcal{C}=\{c_1,\ldots,c_{n}\}$ and $m$ voters $\mathcal{V}=\{v_1,\ldots,v_m\}$.
Following spatial voting theory~\cite{Enelow84}, we associate each candidate and voter with a $d$-dimensional vector corresponding to their positions (opinions) on issues, that is, $c_i, v_j \in \mathcal{I} \subseteq \mathbb{R}^d$.
Each voter $v_j$ ranks candidates in $\mathcal{C}$ according to their $l_p$ distance from $v_j$, $\|v_j - c_i\|_p$, with $1 \le p \le \infty$ an integer; the closest candidate is ranked 1, and the farthest is ranked $n$ in the list of $v_i$'s preferences.
If not mentioned, the parameters of the problem (e.g., $|\mathcal{V}|$, $|\mathcal{C}|$ and $d$) are arbitrary.
We assume that there are no ties. 

In our election control problem, the adversary has a target candidate whose voter perceptions they can manipulate.
Without loss of generality, let $c_1$ be the target candidate.
We assume that the adversary can change the perception of $c_1$ into $\tilde{c}_1$, subject to the constraint that $||\tilde{c}_1 - c_1||_p \leq \epsilon$ for $\epsilon > 0$.
This ``budget'' constraint is natural: for example, if the means for changing perceptions is social media misinformation, the change to perception is likely gradual, and one cannot target arbitrary subsets of issues with an arbitrarily large stream of malicious content.
We consider two types of control: constructive, in which the adversary's goal is for $c_1$ to win the election, and destructive, where the goal is for $c_1$ to lose.
While we assume no ties in the \emph{actual} preference rankings, ties can arise due to adversarial activities; in that case, we always break ties in the adversary's favor.

We consider election control problems for arbitrary \emph{scoring rules}.
In scoring rules, each candidate $c_i$ ranked $t_{ij}$ by voter $v_j$ receives a score $f(t_{ij})$, where $f: [n] \rightarrow \mathbb{R}$ is a non-increasing function.
$c_i$ then receives a total score $s_i = \sum_j f(t_{ij})$ from all voters, and the candidate with the highest score $s_i$ wins the election.
Many common voting rules are positional, such as plurality ($f(1) = 1$, and $f(t) = 0$ for $t \ne 1$), veto ($f(n)=0$ and $f(t) = 1$ for $t \ne n$), Borda ($f(t) = n-t$), and $k$-approval (for some $1 \le k \le n$, $f(t) = 1$ for all $t \le k$, and $f(t) = 0$ for $t > k$). Note that plurality is a special case of $k$-approval with $k=1$.

We study the problem both when issues are binary, i.e., $\mathcal{I} = \{0,1\}^d$, salient if issues are framed in the form of yes-no questions, such as ``do you support leaving the European Union?'', and when issues are real-valued ($\mathcal{I} = \mathbb{R}^d$).

\section{Binary-Valued Issues}
\label{S:bvpm}

We begin by studying a special case of our problem in which the issues are binary, that is, $\mathcal{I} = \{0,1\}^d$, a variant we call
Binary Value Perception Manipulation (\textsc{BVPM}).
\begin{definition}[\textsc{BVPM}]
Given a set of candidates $\mathcal{C}$, voters $\mathcal{V}$, and $d$ issues, is there a $\tilde{c}_1\in \mathcal{I}= \{0,1\}^d$ where ${||\tilde{c}_1 - c_1||_p \leq \epsilon}$ for $\epsilon > 0$ such that $\tilde{c}_1$ wins the election?
\end{definition}
Note that for $l_\infty$ the problem is trivial: either $\epsilon \geq 1$, in which case we can set $c_1$ to match any currently winning candidate (and $c_1$ wins by best-case tiebreaking), or $\epsilon < 1$ and we cannot change $c_1$.
Thus all the results in this section are for an arbitrary $l_p$ norm for $p$ integer, $1 \leq p < \infty$. 
Without loss of generality (since the label of 1 or 0 for each issue is arbitrary), we assume that $c_1$ takes a position labeled as 1 for each issue, i.e., $c_1 = [1,\ldots,1]$.

We begin by showing that even with 2 candidates and majority voting the \textsc{BVPM} problem is NP-complete. We reduce from Binary Issue Selection Control (\textsc{BISC}), shown by \citeauthor{Lu19}~\shortcite{Lu19} to be NP-Complete with best-case tie-breaking even when $|\mathcal{C}|=2$.
\begin{definition}[\textsc{BISC}~\cite{Lu19}]
Given a set of candidates $\mathcal{C}$, voters $\mathcal{V}$, and $d$ issues, is there a nonempty subset of binary issues $S \subseteq [1:d]$ such that a target candidate $c_1$ wins the plurality election?
\end{definition}
\begin{theorem}
\textsc{BVPM} is NP-complete for constructive and destructive control even with 2 candidates and majority voting.
\label{BVPMmajority}
\end{theorem}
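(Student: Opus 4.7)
The plan is to reduce from Binary Issue Selection Control (\textsc{BISC}) with two candidates, cited as NP-complete by Lu et al. The unifying observation is that on binary vectors $\|x-y\|_p^p = \|x-y\|_1$ for every integer $p\ge 1$, so all the $l_p$ norms in \textsc{BVPM} induce identical preference orderings; I can work throughout in Hamming distance and simply set $\epsilon = (d'-1)^{1/p}$ at the end to cover every $1\le p <\infty$ in one shot.

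I would first canonicalize the \textsc{BISC} instance. Flipping the label of issue $i$ across all voters and candidates leaves every Hamming distance invariant, so we may assume $c_1 = [1,\ldots,1]$; then any issue with $c_{1,i}=c_{2,i}$ contributes identically to $\|v_j-c_1\|_1$ and $\|v_j-c_2\|_1$ and cannot affect \textsc{BISC}'s preferences under any $S$, so deleting such issues preserves the answer. What remains has $c_2 = [0,\ldots,0]$ on $d'\ge 1$ issues and is still NP-hard. Take this canonical instance verbatim as the \textsc{BVPM} instance with $\epsilon = (d'-1)^{1/p}$, so admissible flip sets $S'\subseteq [d']$ are exactly the proper subsets of $[d']$. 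The main calculation is short: each flipped bit makes $\tilde c_{1,i}=0=c_{2,i}$ and contributes equally to both distances, while on every unflipped bit $\tilde c_1 = 1$ and $c_2 = 0$, so $v_j$ sides with whichever of $\{0,1\}$ matches $v_{j,i}$. Hence $v_j$ prefers $\tilde c_1$ over $c_2$ iff, restricted to $[d']\setminus S'$, strictly more entries of $v_j$ equal $1$ than $0$, which is exactly \textsc{BISC}'s preference of $v_j$ for $c_1$ under the nonempty subset $S=[d']\setminus S'$. The map $S'\mapsto [d']\setminus S'$ bijects admissible \textsc{BVPM} manipulations and nonempty \textsc{BISC} subsets, so $\tilde c_1$ wins the two-candidate majority vote iff $c_1$ wins \textsc{BISC} on the corresponding $S$, and tiebreaking in the adversary's favor matches best-case tiebreaking for $c_1$. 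Constructive NP-hardness follows; NP-membership is immediate since $\tilde c_1$ is a polynomial-size certificate.

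For destructive control I would apply the same reduction but start from constructive \textsc{BISC} with target $c_2$ (equally NP-hard by relabeling the two candidates): there $c_2$ winning under $S$ matches $\tilde c_1$ losing to $c_2$ in \textsc{BVPM} under the corresponding manipulation, and best-case tiebreaking for $c_2$ coincides with tiebreaking in the adversary's favor when the adversary wants $c_1$ to lose. The main obstacle I anticipate is exactly this tiebreaking bookkeeping in the destructive half: a flip that equalizes $\tilde c_1$ and $c_2$ on some voter's remaining issues can create new voter-level ties that were not present in the original tie-free \textsc{BISC} instance, and these have to be resolved consistently. If that turns out to be more than cosmetic, the cleanest repair is to pad each voter's issue vector by one or two fixed bits that keep the count of relevant issues odd after every admissible manipulation, restoring strict preferences without disturbing the equivalence.
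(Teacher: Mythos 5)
Your reduction is essentially the paper's own proof: the same reduction from \textsc{BISC} with $c_1=[1,\ldots,1]$, $c_2=[0,\ldots,0]$, budget $\epsilon=(d-1)^{1/p}$, and the correspondence between the set of flipped issues and its complement (the selected \textsc{BISC} subset), with the destructive case handled exactly as in the paper by swapping the roles of the two candidates so that adversary-favoring tie-breaking coincides with best-case tie-breaking for the \textsc{BISC} target. Your additional canonicalization of the \textsc{BISC} instance and the anticipated tie-breaking obstacle in the destructive half only make explicit what the paper assumes implicitly; the obstacle does not in fact arise, since ties created by the manipulation correspond exactly to restricted-issue ties in \textsc{BISC}, which are already resolved best-case for the same candidate, so no padding repair is needed.
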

\vspace{-1.5em}
\begin{proof}
It is easy to check if a given $\tilde{c}_1$ wins the election; thus, \textsc{BVPM} is in NP.
We now show hardness for constructive control by reduction from \textsc{BISC}.
Let $\epsilon=(d-1)^{1/p}$, target candidate $c_1 = [1,1,\ldots,1]$ and rival candidate $c_2 = [0,0,\ldots,0]$. 
Let the voter set of \textsc{BVPM} be the same as the one in \textsc{BISC}.

Suppose 2-candidate \textsc{BISC} has a solution  ${S \subseteq [1:d]}, S \neq \emptyset$ that will let ${c}_1$ win the election. For any voter $v_j \in \mathcal{V}, j \in [m]$ that votes for target candidate in \textsc{BISC}, by the problem definition of \textsc{BISC}, we have $\sum_{k\in S} {|{c}_{1,k}-v_{j,k}|^p }\leq \sum_{k\in S} |c_{2,k}-v_{j,k}|^p$. We set $\tilde{c}_{1,k} = 1$ if $k \in S$; $\tilde{c}_{1,k} = 0$ if $k \in S^c$. Since $S \neq \emptyset$, ${|{S}^c|\leq d-1}$, $\tilde{c}_1$ is within the budget constraint. For $k \in S$, we have ${\sum_{k\in S} |\tilde{c}_{1,k}-v_{j,k}|^p} = {\sum_{k\in S} |{c}_{1,k}-v_{j,k}|^p} \leq {\sum_{k\in S} |c_{2,k}-v_{j,k}|^p}$. For $k \in {S}^c$, since ${\tilde{c}_{1,k} =c_{2,k}= 0}$, we then have $|\tilde{c}_{1,k}-v_{j,k}|^p = |c_{2,k}-v_{j,k}|^p$. Thus we have ${\sum_{k=1}^d |\tilde{c}_{1,k}-v_{j,k}|^p} \leq {\sum_{k=1}^d |c_{2,k}-v_{j,k}|^p}$. %
This means voter $v_j$ votes for $\tilde{c}_1$ in \textsc{BVPM} $({||\tilde{c}_1-v_j||_p\leq ||c_2-v_j||_p})$. $\tilde{c}_1$ wins the election and is a solution to two candidates \textsc{BVPM}.

If two candidates \textsc{BVPM} has a solution $\tilde{c}_1$ that wins the election, by the problem definition of BVPM, we have ${\sum_{k=1}^d |\tilde{c}_{1,k}-v_{j,k}|^p} \leq {\sum_{k=1}^d |c_{2,k}-v_{j,k}|^p}$. We let ${S=\{k\in[d]\mid \tilde{c}_{1,k} = 1\}}$. Since $\tilde{c}_{1,k} =c_{2,k}= 0$, ${k\in {S}^c}$, we have $\sum_{k \in S^c}|\tilde{c}_{1,k}-v_{j,k}|^p = {\sum_{k\in S^c}|c_{2,k}-v_{j,k}|^p}$; since $\tilde{c}_{1,k}= {c}_{1,k} = 1, k\in S$, we then have $\sum_{k\in S} |{c}_{1,k}-v_{j,k}|^p={\sum_{k\in S} |\tilde{c}_{1,k}-v_{j,k}|^p} \leq \sum_{k\in S} |c_{2,k}-v_{j,k}|^p$. Due to the budget constraint $\epsilon = (d-1)^{1/p}$, we must have $S\neq \emptyset$, which satisfies the \textsc{BISC} solution requirement. $S$ is a solution set to two candidates \textsc{BISC}.

For destructive control, the same argument applies after switching the positions of $c_1$ and $c_2$.
\end{proof}

While \textsc{BVPM} is hard in general, we next show that the problem is tractable for a constant number of voters.\footnote{Note that this is trivial for a constant number of binary issues.}
While at first glance a constant number of voters seems an impractical restriction, \emph{we subsequently show that this result offers real insight even when the number of voters is arbitrary}.

\begin{theorem}
When the number of voters is constant, \textsc{BVPM} can be solved in polynomial time  for arbitrary scoring rules, for both constructive and destructive control.
\label{BVPMconstantvoter}
\end{theorem}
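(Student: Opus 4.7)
The plan is to exploit the fact that when $|\mathcal{V}|=m=O(1)$, the set of election outcomes producible by any manipulated $\tilde c_1$ is polynomial in size, and for each one, the feasibility question reduces to a bounded integer linear program in $O(1)$ variables.

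First, for each voter $v_j$, I pre-sort the fixed candidates $c_2,\dots,c_n$ by distance to $v_j$; the rank $t_j\in[n]$ assigned to $\tilde c_1$ by $v_j$ is entirely determined by which of the $n$ intervals (delimited by these sorted distances, with tie-breaking in the adversary's favor) contains $\|\tilde c_1-v_j\|_p^p$. Once a rank profile $(t_1,\dots,t_m)\in[n]^m$ is fixed, the full ranking produced by every voter is determined, so every candidate's total score $s_i$ under any non-increasing scoring rule $f$ can be computed directly; I then test the outcome condition (constructive: $s_1$ is the maximum; destructive: some other $s_i \ge s_1$ under favorable tie-breaking). There are only $n^m=\mathrm{poly}(n)$ profiles to enumerate.

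For each rank profile that wins, I decide whether a binary vector $\tilde c_1$ realizing it exists. Because $\|u-v\|_p^p$ equals the Hamming distance $d_H(u,v)$ for $u,v\in\{0,1\}^d$, the constraints on $\tilde c_1$ become $d_H(\tilde c_1,c_1)\le\lfloor\epsilon^p\rfloor$ and $d_H(\tilde c_1,v_j)\in[L_j,U_j]$ for each $j$, where $L_j,U_j$ are integer endpoints read off the sorted distances. I group the $d$ coordinates by the length-$m$ pattern $\alpha\in\{0,1\}^m$ of voter values $(v_{1,k},\dots,v_{m,k})$ appearing in coordinate $k$, producing $2^m$ groups of sizes $n_\alpha$ with $\sum_\alpha n_\alpha = d$. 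Letting $x_\alpha\in\{0,\dots,n_\alpha\}$ denote how many coordinates in group $\alpha$ are set to $0$, each Hamming-distance constraint becomes a linear inequality in the $x_\alpha$, and the symmetry within each group ensures any feasible count vector can be realized by an actual assignment.

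The main obstacle is solving this integer feasibility system, but with only $2^m=O(1)$ bounded integer variables one can either brute-force enumerate the $\prod_\alpha(n_\alpha+1) \le (d+1)^{2^m}$ count vectors or invoke Lenstra's fixed-dimension ILP algorithm; either runs in time polynomial in $d$. Multiplying by the polynomial number of rank profiles gives polynomial overall runtime, with the same algorithm handling both constructive and destructive control.
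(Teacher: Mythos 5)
Your proposal is correct and takes essentially the same route as the paper: partition the issues into at most $2^m$ equivalence classes by the voters' value pattern, enumerate per-voter rank scenarios (constantly many since $m$ is constant), and decide each scenario via a bounded integer feasibility problem in $O(2^m)$ variables solvable by brute-force counting or fixed-dimension ILP; the paper merely optimizes this by pruning scenarios to the $|f_{\text{uniq}}|^m$ distinct score levels and using one-sided distance constraints (its Lemmas~\ref{constructiverank} and~\ref{constructiverep}), whereas your exact-rank, two-sided version establishes the same polynomial bound. One small patch is needed: your coordinate groups must also record $c_1$'s value on each issue (or you should first normalize $c_1=[1,\ldots,1]$, as the paper does without loss of generality), since otherwise the budget constraint $d_H(\tilde{c}_1,c_1)\le\lfloor\epsilon^{p}\rfloor$ and the within-group symmetry used to realize a count vector are not determined by the counts $x_\alpha$ alone.
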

\begin{proof}[Proof sketch]
Given an issue $j$, let $\mathbf{v}^j$ be a vector corresponding to the position of each voter on issue $j$.
The key idea is that when the number of voters $m$ is a constant, there is also a constant number $l \le 2^m$ of \emph{issue equivalence sets}, where an equivalence set $I$ is a set of issues with identical $\mathbf{v}^j$ (consequently, each issue in $I$ has an identical and interchangeable impact on the election outcome).
Since the number of issues of each equivalence set we are allowed to flip is at most $\left \lfloor{\epsilon^{p}}\right \rfloor \leq d$, it is direct that we can exhaustively enumerate all $O(d^{2^m})$ possibilities for arbitrary $\epsilon$, which is polynomial since $m$ is constant. 
\end{proof}

While there is a simple poly-time algorithm for solving \textsc{BVPM}, we can actually considerably improve on its time complexity by leveraging additional problem structure.
We begin with the constructive control case for arbitrary scoring rules. A key feature of arbitrary scoring rules is that as long as $c_1$ receives (one of) the highest scores, $c_1$ wins the election. 
Since the distances between each voter $v_j$ and all candidates $c_i$ other than $c_1$ are fixed,  the relative rankings of candidates $c_i~(i \geq 2)$ w.r.t. voter $v_j, j \in[m]$ are fixed as well. 
Given the rankings of candidates $c_i ~(i \geq 2)$ w.r.t. to voter $v_j$ from closest to furthest as $c_{i_1},\ldots,c_{i_{n-1}}$, we can enumerate all scenarios of insertion positions of $c_1$ in this sequence. 
We denote the final ranking of $c_1$ after insertion w.r.t. voter $v_j$ as $r_j$, meaning $c_1$ will receive a score of $f(r_j)$ from $v_j, j\in[m]$. By going through all scenarios of $c_1$ getting a final ranking position of $(r_1,r_2,\ldots,r_m)$, $r_j \in [n], \forall j \in[m]$, which corresponds to $c_1$ getting a score of $(f(r_1),f(r_2),\ldots,f(r_m))$, we cover all the scenarios of ${c}_1$ winning. 
As shown in Lemma~\ref{constructiverank}, this is equivalent to enumerating all scenarios of ${c}_1$ getting a ranking position \textit{higher} than $r_j\in[n]$ w.r.t. $v_j$ for $j\in[m]$.
The missing proofs of this and other results are provided in the Supplement.
\begin{lemma}
\label{constructiverank}
For constructive control, if a ranking position $(r_1,r_2,\ldots,r_m)$ is feasible for ${c}_1$ under the budget constraint and lets ${c}_1$ win the election, then for a ranking position $(r'_1,r'_2,\ldots,r'_m)$ that is feasible with $r'_j\leq r_j, \forall j\in [m]$, it will also let ${c}_1$ win the election.
\end{lemma}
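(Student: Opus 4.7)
The plan is to prove this by a straightforward monotonicity argument, leveraging the fact that the scoring function $f$ is non-increasing and that the relative ranking of candidates $c_2,\ldots,c_n$ with respect to any fixed voter $v_j$ is not affected by the perception change to $c_1$.

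First, I would fix an arbitrary voter $v_j$ and compare the two rankings produced by $\tilde{c}_1$ placed at position $r_j$ versus $r'_j \leq r_j$. Since the relative order of $c_2,\ldots,c_n$ is determined entirely by $\|v_j - c_i\|_p$ for $i \geq 2$, which is unaffected by any change to $c_1$, inserting $\tilde{c}_1$ into the sequence $(c_{i_1},\ldots,c_{i_{n-1}})$ at rank $r'_j$ rather than $r_j$ has a precise effect: the candidates previously occupying positions $r'_j, r'_j+1, \ldots, r_j - 1$ (when $\tilde{c}_1$ was at rank $r_j$) each shift down by exactly one position; all other candidates retain their rank. In particular, every $c_i$ with $i \geq 2$ has its new rank weakly greater than its old rank.

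Next, I would translate these rank changes into score changes. Because $f$ is non-increasing, $f(r'_j) \geq f(r_j)$, so $\tilde{c}_1$'s contribution from $v_j$ weakly increases. For every $c_i$ with $i\geq 2$, the rank from $v_j$ weakly increases, so its contribution weakly decreases. Summing over all voters $j \in [m]$, this gives $\tilde{s}_1(r') \geq \tilde{s}_1(r)$ and $\tilde{s}_i(r') \leq \tilde{s}_i(r)$ for all $i \geq 2$, where $\tilde{s}_i(\cdot)$ denotes the total score of $c_i$ under the indicated ranking profile.

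Finally, I would combine these inequalities with the winning hypothesis. By assumption, $\tilde{s}_1(r) \geq \tilde{s}_i(r)$ for every $i \geq 2$ (where equality is permissible thanks to best-case tiebreaking), so
\[
\tilde{s}_1(r') \;\geq\; \tilde{s}_1(r) \;\geq\; \tilde{s}_i(r) \;\geq\; \tilde{s}_i(r')
\]
for every rival $c_i$, and $\tilde{c}_1$ remains a winner. The only subtlety to be careful about is the bookkeeping in the rank-shift step; everything else is routine, and no new feasibility argument is needed because $(r'_1,\ldots,r'_m)$ is assumed feasible.
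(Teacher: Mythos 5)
Your proof is correct and follows essentially the same route as the paper's: you track how moving $\tilde{c}_1$ from rank $r_j$ to $r'_j \leq r_j$ shifts only the rivals formerly at positions $r'_j,\ldots,r_j-1$ down by one, so by monotonicity of $f$ the score of $c_1$ weakly increases while every rival's score weakly decreases, and the winning condition is preserved under best-case tiebreaking. No gaps.
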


This means by enumerating all scenarios of ${c}_1$ getting a ranking position \textit{higher} than $(r_1,r_2,\ldots,r_m)$, which corresponds to $c_1$ getting a score of \textit{at least} $(f(r_1),f(r_2),\ldots,f(r_m))$, we cover all the possible scenarios of ${c}_1$ winning.
In fact, we can further simplify the calculation by only enumerating some ranking positions each corresponding to a \textit{unique} $f$ score value. 
Given an arbitrary scoring function $f$ that has $r$ unique values $({|f_{\text{uniq}}|=r})$, since $f$ is non-increasing, we can partition the domain of $f$ by ${0 = s_0 < s_1< \dots < s_r = n}$, so that $\{f(k)\}_{k=s_i+1}^{s_{i+1}}$ have the same value, ${i\in\{0,1,\ldots,r-1\}}$.
This means $s_{i+1}$ is the lowest ranking position that corresponds to score $f(s_{i+1})$ and $\{f(s_i)\}_{i=1}^r$ contains all the unique values of $f$.

\begin{lemma}
\label{constructiverep}
For constructive control, by enumerating all scenarios of ${c}_1$ getting a ranking position higher than $t_j$ w.r.t. voter $v_j$ for $t_j \in\{s_1,\ldots,s_r\},\forall j\in [m]$, we cover all the possible scenarios of ${c}_1$ winning.
\end{lemma}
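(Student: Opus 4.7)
The plan is to take any winning manipulation and show that the restricted enumeration over $\{s_1,\ldots,s_r\}^m$ already captures it, by ``rounding up'' each induced rank to the right endpoint of its $f$-plateau. The key insight is that shifting $c_1$'s assumed rank within a single plateau is invisible both to $c_1$'s own score (by definition of a plateau) and, more subtly, to every competitor's score, so the coarser enumeration loses nothing.

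Concretely, given a feasible $\tilde{c}_1$ that causes $c_1$ to win with induced ranks $(r_1,\ldots,r_m)$, for each $j$ I would locate the unique $i_j$ with $s_{i_j}<r_j\le s_{i_j+1}$ and set $t_j=s_{i_j+1}\in\{s_1,\ldots,s_r\}$. Then $t_j\ge r_j$ and $f(t_j)=f(r_j)$, since both lie in the plateau on which $f$ is constant. Because $r_j\le t_j$, the same $\tilde{c}_1$ trivially satisfies the ``rank at most $t_j$'' constraint at each voter, so the restricted enumeration will indeed consider $\tilde{c}_1$ when it scans the tuple $(t_1,\ldots,t_m)$.

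It then remains to check that the scenario $(t_1,\ldots,t_m)$, formally treating $c_1$ as placed at rank $t_j$ at $v_j$, is still a winning scenario. The score of $c_1$ is obviously unchanged: $\sum_j f(t_j)=\sum_j f(r_j)$. For any competitor $c_i$ at voter $v_j$, a brief case analysis of the two insertion positions shows that its rank changes only when its rank $k$ under the $r_j$-scenario satisfies $r_j<k\le t_j$, in which case the rank decreases by one to $k-1$. But then both $k$ and $k-1$ sit inside $[s_{i_j}+1,s_{i_j+1}]$, so $f(k)=f(k-1)$ and $v_j$'s contribution to $c_i$'s score is identical. Summing over voters, every competitor's total score is preserved, so $c_1$ remains a winner in the rounded scenario, and the enumeration succeeds.

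The main obstacle is exactly this score-preservation step for the competing candidates: it is not automatic that shifting $c_1$ later in the ranking leaves the others' totals alone, because a candidate displaced by $c_1$ can in principle move up in rank and gain score. Ruling this out is where the plateau structure of $f$ is essential, and it is precisely what makes the collapse from $[n]^m$ down to $\{s_1,\ldots,s_r\}^m$ lossless for arbitrary scoring rules.
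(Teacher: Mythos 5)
Your proof is correct and takes essentially the same route as the paper's: both rest on the observation that moving $c_1$'s position within a single plateau $\{s_{i}+1,\ldots,s_{i+1}\}$ of $f$ leaves every candidate's score from $v_j$ unchanged (any displaced competitor shifts between two ranks with equal $f$-value), so any winning rank profile can be rounded up to one in $\{s_1,\ldots,s_r\}^m$ without loss. The paper conveys this through its insertion table; your explicit case analysis of the competitors' ranks is the same calculation made more verbose.
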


Next we solve the problem for each $(t_1,\ldots,t_m)$ scenario with $t_j \in\{s_1,\ldots,s_r\}, \forall j \in [m]$. For each voter $v_j, {j \in[m]}$, we rank candidates $c_i ~(i \geq2)$ by their distances to voter $v_j$ from closest to furthest, and use $d_j^{t_j}$ to denote the distance between $v_j$ and the candidate ranked $t_j$-th closest to it. Since the tie breaks in the adversary's favor, as long as $c_1$'s distance to $v_j$ is no more than $d_j^{t_j}$, $c_1$ will receive a score of at least $f(t_j)$ from $v_j$. Notice that since the rankings of $c_i~(i\geq 2)$ do not include $c_1$, only $d_j^{t_j}$ for $1\leq t_j\leq n-1$ are properly defined. For $t_{j} = s_r= n$, we let $d_j^{n} = +\infty$, since $c_1$ is guaranteed to get at least the lowest score $f(n)$. 

For each scenario, the problem can be represented as the following integer linear constraint problem:
\begin{subequations}
\begin{align}
    &x_i \leq \min\{b_i,\left \lfloor{\epsilon^{p}}\right \rfloor\}, \quad 1\leq i\leq 2^m\\
    &\sum_{i=1}^{2^m}z_{ij} x_i +(d_{j}^0)^p\leq (d_j^{t_{j}})^p,\quad 1\leq j \leq m \label{dist}\\
    & \sum_{i=1}^{2^m} x_i \leq \left \lfloor{\epsilon^{p}}\right \rfloor \quad x_i \in \mathbb{Z}^+,
\end{align}
\end{subequations}
where $x_i$ is the number of issues in an issue equivalence class $i$ that we want to flip to $0$, $b_i$ is the size of the \mbox{$i$-th} issue equivalence class (where $\sum_i b_i = d$),
$d_{j}^{0}$ is the original distance between target candidate $c_1$ and voter $v_j$. 
$z_{ij} \in \{-1,+1\}$ is the sign of impact of flipping the issue: $z_{ij} = +1$ if previously the $j$-th voter in any issue in the $i$-th equivalence class is $1$ (since flipping the $c_1$ to $0$ will increase the distance) and $z_{ij} = -1$ if previously it is $0$, since flip the target candidate to $0$ will decrease the distance.
Since the size of the input of this integer feasibility problem is $O(\log(d))$ \emph{and the number of variables is constant}, it can be solved in time $O(\log(d))$~\cite[Theorem 2.8.1]{Lokshtanov09}. 

The total number of times we need to run the ILP and check whether $\tilde{c}_1$ wins the election is bounded by the number of unique $(t_1,\ldots,t_m)$ scenarios, which is ${|f_{\text{uniq}}|}^m$. 
For an arbitrary scoring function $f$, the time complexity for calculating voter-candidate distances is $O(n d)$, ranking distances takes $O(n\log(n))$ time. The calculation of the issue equivalence sets takes $O(d)$ time. For each scenario, solving the ILP takes $O(\log (d))$ time; calculating the distance between $\tilde{c}_1$ and all voters takes $O(d)$ time; checking whether $\tilde{c}_1$ wins the election takes $O(n)$ time. The total time complexity of the algorithm is ${O(n(d + \log(n)) + |f_{\text{uniq}}|^m (d+n))}$.

If, in addition, $|f_{\text{uniq}}|=r$ is constant (e.g., for plurality), then the number of scenarios ${|f_{\text{uniq}}|}^m$ is constant. 
Moreover, we do not need to do a total sort for the distances. Through finding the $t_j$-th order statistics and Quicksort Partition, the total time complexity of the algorithm is $O(n d)$.

In Supplement~\ref{S:supp_binary_destructive} we present a similar analysis and algorithm as above for destructive control.
In either case, the complexity is linear in the dimension $d$ of the issue space.

As noted earlier, considering a constant number of voters may seem unrealistic.
However, we note that these algorithms are straightforward to generalize to a setting with an \emph{arbitrary number of voters}, but in which the positions of voters on issues, $v_j$, can only take on values from a small collection of possibilities $Q$ (that is, $|Q|$ is a constant, and for each voter $j$, $v_j \in Q$).
Specifically, the only change is to calculate the \textit{weighted} final score in each case above, where the weight for each distinct voter position (opinion) type $q \in Q$ is the number of voters $j$ with position $v_j = q$.
This is expressed in the following corollary.
\begin{corollary}
\textsc{BVPM} can be solved in polynomial time when the number of distinct voter opinions is constant for constructive and destructive control for arbitrary scoring rules.
\end{corollary}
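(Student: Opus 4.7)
The plan is to reduce the constant-distinct-opinions setting to the constant-voters algorithm of Theorem~\ref{BVPMconstantvoter} by observing that voters sharing a position are fully interchangeable for computing scores. First I would partition $\mathcal{V}$ into at most $|Q|$ groups $\mathcal{V}_q = \{v_j : v_j = q\}$, one per distinct opinion $q \in Q$, with weights $w_q = |\mathcal{V}_q|$. Because every voter in $\mathcal{V}_q$ has the same position, they all produce the same ranking of $\mathcal{C}$ (both before and after the manipulation of $c_1$), so the total score contributed by $\mathcal{V}_q$ to a candidate $c_i$ ranked $t$ from position $q$ is exactly $w_q\,f(t)$.

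Next I would re-run the algorithm behind Theorem~\ref{BVPMconstantvoter} with two cosmetic changes. (i) The issue equivalence sets are now defined over the $|Q|$ representative positions rather than over $m$ voters, giving at most $2^{|Q|}$ classes; since $|Q|$ is constant, the ILP derived in the proof still has a constant number of variables and is solvable in $O(\log d)$ time by Lenstra-type arguments, as before. (ii) When enumerating the ranking-position scenarios $(t_q)_{q \in Q}$ promised by Lemma~\ref{constructiverep}, I check whether $\tilde c_1$ wins by computing each candidate's \emph{weighted} total score $\sum_{q \in Q} w_q f(t_{q,i})$, instead of a flat sum over voters. The number of scenarios remains $|f_{\text{uniq}}|^{|Q|}$, which is constant. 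A one-time $O(m)$ sweep over $\mathcal{V}$ yields the weights $w_q$ and a pointer from each voter to its group.

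The main thing to verify carefully is that Lemmas~\ref{constructiverank} and \ref{constructiverep} still apply after the collapse to representative voters. Both lemmas rely only on monotonicity of $f$ together with the fact that if $\tilde c_1$'s distance to a voter can attain ranking $r_j$, it can also attain any $r'_j \le r_j$; this property is preserved when voters are replaced by their position representatives, because the distance $\|\tilde c_1 - q\|_p$ depends only on $q$, and the per-group weight $w_q$ is a positive constant that scales but does not reorder the contribution. Thus the enumeration still covers every winning configuration. Destructive control is handled identically using the analogue derived in Supplement~\ref{S:supp_binary_destructive}, replacing raw counts by weighted counts.

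Putting the pieces together, the running time becomes $O\bigl(m + n(d + \log n) + |f_{\text{uniq}}|^{|Q|}(d+n)\bigr)$, which is polynomial since $|Q|$ is a constant; the dependence on the actual number of voters enters only through the initial $O(m)$ bucketing step. The main obstacle, which I expect to be light, is just checking that the ILP's variable count stays bounded by $2^{|Q|}$ and that the weighted-score check preserves the correctness guarantees of Lemmas~\ref{constructiverank}--\ref{constructiverep}; no new combinatorial idea beyond grouping is required.
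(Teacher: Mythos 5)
Your proposal is correct and follows essentially the same route as the paper: group voters by identical opinion vectors, run the constant-voter algorithm of Theorem~\ref{BVPMconstantvoter} over the $|Q|$ representative positions, and replace raw score sums by weighted sums with weights $w_q = |\mathcal{V}_q|$, which is exactly the generalization the paper describes. Your additional check that Lemmas~\ref{constructiverank} and~\ref{constructiverep} survive the collapse to representatives (since positive weights scale but do not reorder contributions) is a sound and welcome elaboration of what the paper leaves implicit.
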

The key insight that our results offer is that
\emph{the complexity of manipulating elections by changing voter perceptions about candidates hinges on the diversity of opinions among voters}.
In particular, when voters hold a broad diversity of views, manipulation is intractable; if, in contrast, voters are siloed into a relatively small collection of echo chambers, subverting elections becomes easy. %
Below, we show that this observation extends to real-valued issues.

\section{Real-Valued Issues}
Next, we turn to Real Value Perception Manipulation (\textsc{RVPM}), the problem identical to \textsc{BVPM} except that now the issue space $\mathcal{I}$ is real-valued.

\subsection{Hardness Results}
We begin by showing that nearly every variant of \textsc{RVPM} is, in general, computationally intractable.
First, we show that the election control problem is hard under $l_p$ norm with integer $1<p\leq \infty$ under destructive control even with 2 candidates and majority voting, and constructive control even for plurality voting. 
Nonetheless, we show that constructive control with $l_\infty$ norm and only two candidates is in P.

Our hardness result for destructive control %
uses a reduction from \textsc{3-SAT}, the proofs are deferred to the supplement (\ref{S:rvpm_constructive_hardness_lp} and \ref{S:rvpm_constructive_hardness_linf}).

\begin{theorem}
\label{rvpm_des_lp}
The destructive control variant of \textsc{RVPM} is NP-complete under $l_p$ norm for integer $1 < p \leq \infty$ even with two candidates and majority voting.
\end{theorem}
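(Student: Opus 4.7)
The plan is to reduce from \textsc{3-SAT}. Membership in NP is immediate: given a candidate perturbation $\tilde{c}_1$ with rational coordinates, we verify the budget constraint and count the number of voters (weakly) closer to $c_2$ than to $\tilde{c}_1$; if at least $\lceil m/2 \rceil$ such voters exist, $c_1$ loses under our tie-breaking convention.

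For hardness, take a 3-SAT instance $\varphi$ on $n$ variables $x_1,\ldots,x_n$ and $m$ clauses $C_1,\ldots,C_m$. I would construct an RVPM instance with $d=n$ issues in which the sign of each coordinate $\tilde{c}_{1,i}-c_{1,i}$ encodes the truth value of $x_i$. Place $c_1$ at the origin and $c_2$ at a point chosen so that the $l_p$ distances decompose cleanly coordinate-by-coordinate. Tune the budget $\epsilon$ so that the feasible set is exactly large enough to shift each coordinate by one unit in either direction: for $l_\infty$ take $\epsilon=1$, making the feasible region $[-1,1]^n$; for finite $l_p$ take $\epsilon=n^{1/p}$, so that flipping every coordinate by $\pm 1$ exactly saturates the budget.

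For each clause $C_k=(\ell_{k,1}\vee\ell_{k,2}\vee\ell_{k,3})$, introduce a clause voter $v^k$ placed so that $v^k$ (weakly) prefers $c_2$ to $\tilde{c}_1$ if and only if at least one of the three literals of $C_k$ is satisfied by the sign pattern of $\tilde{c}_1-c_1$. The construction exploits the coordinate-wise additivity of $l_p^p$ distances, positioning $v^k$'s three relevant coordinates so that the difference $\|v^k-\tilde{c}_1\|_p^p-\|v^k-c_2\|_p^p$ has a known, literal-encoding sign contribution from each coordinate and is neutral in the remaining $n-3$ coordinates. I would then add a set of filler voters whose preferences are independent of $\tilde{c}_1$ (e.g., placed far enough from both candidates in an extra fixed region) to calibrate the majority threshold so that $c_1$ loses exactly when all $m$ clause voters prefer $c_2$, i.e., exactly when $\varphi$ is satisfied.

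The main obstacle will be ruling out, for finite $l_p$ with $1<p<\infty$, the possibility that the adversary exploits fractional, nonuniform perturbations within the curved $l_p$ ball to satisfy more clause voters than any $\pm 1$ assignment could. I plan to address this by calibrating clause voter positions so that each voter's preference toggles at exactly the $\pm 1$ threshold with a uniform slack, and then arguing by a coordinate-wise exchange/rounding argument that any feasible $\tilde{c}_1$ which induces the required number of $c_2$-votes can be rounded coordinate-wise to a $\pm 1$ vector while preserving every satisfied clause voter and not exceeding the budget (using the convexity of $t\mapsto |t|^p$ to charge the rounding cost against the slack). For $l_\infty$ this rounding is automatic from the cube structure; for finite $l_p$ the careful budget accounting to justify the rounding is the most delicate step.
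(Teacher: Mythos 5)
Your high-level shape matches the paper's: both reduce from \textsc{3-SAT}, encode truth values in the signs of the coordinates of $\tilde{c}_1-c_1$, use one voter per clause whose preference flips exactly when some literal is satisfied, and add dummy/filler voters to tune the majority threshold. Your $l_\infty$ sketch is essentially the paper's construction (budget $\epsilon=1$, clause voters with $\pm1$ entries, discretization coming for free from the cube geometry), and that part is fine.

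The gap is in the finite-$p$ case, and it sits exactly where you flag ``the most delicate step'': you never actually force $\tilde{c}_1$ to behave like a Boolean assignment, and the plan you give for doing so does not work. With your budget $\epsilon=n^{1/p}$ the feasible region contains points such as $(n^{1/p},0,\ldots,0)$ that concentrate the entire budget in a single coordinate; since the clause constraints for destructive control are \emph{far-from-ball} constraints ($\|\tilde{c}_1-v^k\|_p \geq \|c_2-v^k\|_p$), pumping one coordinate (unrelated to any literal) makes $\tilde{c}_1$ far from every clause voter at once, so $c_1$ can be made to lose regardless of satisfiability unless the thresholds scale with $n$ in a way you have not specified. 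Your proposed fix, coordinate-wise rounding to $\pm1$ justified by convexity of $|t|^p$, also cuts the wrong way: pushing a small coordinate out to $\pm1$ can exceed the budget (which the $\pm1$ vector already saturates), while pulling a large coordinate in can violate a far-constraint, and reverse-convex constraints do not admit the exchange argument you invoke. The paper resolves this not by rounding but by an explicit discretization gadget: it works in $\mathbb{R}^{d+1}$, adds $2d$ antipodal voters at $\pm e_i$ plus $r$ dummy voters pinned to $c_1$, and sets $\epsilon=d^{1/p-1}$, the exact $l_p$ norm of the all-$\pm\frac{1}{d}$ vector. Losing a majority then forces losing one voter from every pair $\{v_{e_i},v_{-e_i}\}$, which requires $|\tilde{c}_{1,i}|\geq \frac{1}{d}$ in every coordinate with a prescribed sign; the budget then pins $\tilde{c}_1$ to be exactly a $\pm\frac{1}{d}$ hypercube vertex (with the extra coordinate zero), and only after this forcing do the clause voters (with parameters $\alpha,l$ chosen via an intermediate-value argument) encode satisfiability. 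Without such a gadget, or a correct replacement for it, your reduction for $1<p<\infty$ does not go through.
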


The following theorem (proved in Supplements~\ref{S:supp_cc_manyc_lp} and~\ref{S:supp_cc_manyc}) shows that constructive control is also hard.
\begin{theorem}
\label{T:cc_real_manyc_lp}
The constructive control variant of \textsc{RVPM} under $l_p$ norm for integer $1< p \leq \infty$ is NP-complete for plurality voting.
\end{theorem}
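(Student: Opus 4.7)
The plan is to show NP membership and then reduce from 3-SAT to establish hardness, handling the finite $p>1$ and $p=\infty$ cases with the same high-level encoding but different gadgetry.

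For NP membership, given a candidate position $\tilde{c}_1$ (which can be taken to have polynomially-bounded bit-complexity because, conditional on any fixed tuple of first-place winners for the voters, the set of feasible $\tilde{c}_1$ is a convex region described by polynomially many polynomial-size constraints), we can in polynomial time evaluate $\|\tilde{c}_1-c_1\|_p \leq \epsilon$ and tally plurality votes.

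For hardness, given a 3-SAT instance with variables $x_1,\dots,x_n$ and clauses $C_1,\dots,C_m$, I would use $d=n$ issues, place $c_1$ at the origin, and introduce two families of gadgets. A \emph{clause gadget} for each $C_j$ consists of a distinguished ``clause voter'' $v_j$ and a rival candidate, positioned so that $v_j$ prefers $\tilde c_1$ over the rival exactly when the sign pattern of $\tilde c_1$ on the three coordinates named in $C_j$ satisfies at least one literal of $C_j$. Filler voters are added so that $c_1$ wins the overall plurality if and only if it captures every clause voter. A \emph{forcing gadget} ensures that under any feasible $\tilde{c}_1$, the magnitude $|\tilde c_{1,i}|$ is essentially some fixed $\tau$ for every $i$, so $\tilde c_1$ encodes a $\{-\tau,+\tau\}^n$ assignment via its signs.

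For $p=\infty$ the forcing gadget is trivial: the coordinate-wise budget $\epsilon$ already confines the adversary to the box $[-\epsilon,\epsilon]^n$, and since each clause win condition depends only on signs, $\tilde c_{1,i}=\pm\epsilon$ is optimal. For finite $p>1$, however, the $l_p$ ball is strictly convex, so the adversary could spread the budget fractionally (e.g., $\tilde c_{1,i}=\epsilon/n^{1/p}$ on every coordinate) and satisfy no clause at all. To block this I would add, in each dimension $i$, a pair of auxiliary voters and rival candidates that force $|\tilde c_{1,i}|\geq\tau$ in order for $c_1$ to capture at least one of them; setting $\epsilon=n^{1/p}\tau$ makes the $l_p$ budget exactly tight at the corners of $\{-\tau,+\tau\}^n$ and infeasible at any other point that simultaneously meets all the threshold constraints.

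The main obstacle will be the simultaneous calibration of $\tau$, the locations of the forcing voters and their rivals, and the locations of the clause voters and their rivals: corner positions must be realizable within the $l_p$ budget, each forcing voter must be captured iff the corresponding coordinate magnitude meets $\tau$, clause-voter preferences must depend only on the sign pattern rather than precise magnitudes, and rival-candidate positions must not introduce unintended first-place votes that corrupt the plurality count. These constraints are in tension, and making the arithmetic work cleanly for arbitrary integer $p>1$ (rather than only for $p=2$) is where most of the technical effort will go.
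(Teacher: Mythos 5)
Your high-level plan for finite $p$ coincides with the paper's: reduce from \textsc{3-SAT}, encode the assignment in the signs of $\tilde c_1$'s coordinates, use per-dimension voters with dedicated rival candidates to force $\tilde c_1$ to the corners of a cube whose $l_p$ norm exactly exhausts the budget, use per-clause voters with dedicated rivals to test satisfaction, and use dummy voters for a main rival so that winning the plurality forces capturing all forcing voters and all clause voters. But the part you defer as ``simultaneous calibration'' is precisely the technical core of the paper's proof, not a routine detail: the corner-forcing cannot literally take the form ``$|\tilde c_{1,i}|\geq\tau$'' because capturing a voter under $l_p$ is a ball constraint coupling \emph{all} coordinates; the paper instead proves a smallest-enclosing-$l_p$-ball lemma (center $c\sum_i \pm e_i$ with $c=\frac{1}{1+(d'-1)^{1/(p-1)}}$, radius $r_{d'}$) and sets $\epsilon=c(d')^{1/p}$ so that being within $r_{d'}$ of one of $\pm e_i$ for every $i$ under the budget pins $\tilde c_1$ to $\{\pm c\}^{d'}$, and then chooses the clause-voter parameters $\alpha,l$ (together with an extra auxiliary coordinate) by an intermediate-value-theorem argument so that the clause ball contains exactly the corners with at least one satisfied literal. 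Without something equivalent to that lemma and calibration, the feasibility of your corner positions and the ``depends only on signs'' property of the clause test are unproven.

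More seriously, your $p=\infty$ clause gadget is flawed as stated. Capturing a voter $v_j$ from its rival is the condition $\|\tilde c_1-v_j\|_\infty\leq\theta_j$ with $\theta_j$ a fixed number, i.e.\ a \emph{conjunction} of per-coordinate constraints; over sign vectors $\tilde c_1\in\{\pm\epsilon\}^n$ each conjunct is constant or fixes the sign of one coordinate, and such a conjunction can never equal the \emph{disjunction} ``at least one of the three literals is satisfied'' (any conjunct that excludes the all-false pattern also excludes patterns in which the other two literals are true). So a single clause voter per clause cannot work under $l_\infty$ for constructive control. The paper circumvents this by creating, for each clause, seven voter--candidate pairs, one per satisfying assignment of the clause's three variables (each capture condition is then a pure conjunction), observing that $\tilde c_1$ can capture at most one pair per clause, and setting the dummy-vote threshold so that winning requires capturing one pair from every clause. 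You would need this (or an equivalent device) to repair the $p=\infty$ branch of your argument.
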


Note that Theorem~\ref{T:cc_real_manyc_lp} is stated for an arbitrary number of candidates.
For two candidates and $l_\infty$ norm, however, constructive variant of \textsc{RVPM} is easy.
\begin{theorem}
\label{T:rvpm_cc_2}
The constructive control $l_\infty$ norm variant of \textsc{RVPM} with 2 candidates can be solved in time $O(m d)$ for arbitrary scoring rules.
\end{theorem}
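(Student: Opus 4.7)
The plan has two pieces: first reduce the two-candidate problem to a single geometric maximum-coverage question, then exhibit a closed-form optimal manipulation. With two candidates and any non-increasing scoring rule, $f(1)\geq f(2)$. A direct calculation of $c_1$'s score minus $c_2$'s score gives $(2k-m)(f(1)-f(2))$ where $k$ is the number of voters preferring $\tilde c_1$; hence either $f(1)=f(2)$ and any $\tilde c_1$ wins by favorable tiebreaking, or $c_1$ wins precisely when $k\geq \lceil m/2\rceil$. Setting $r_j:=\|c_2-v_j\|_\infty$ and using that voters also resolve ties in the adversary's favor, the problem reduces to: is there $\tilde c_1\in B_\infty(c_1,\epsilon)$ lying in $B_\infty(v_j,r_j)$ for at least $\lceil m/2\rceil$ voters?

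The main step is to exhibit a single dominant choice of $\tilde c_1$. Define the coordinate-wise projection $\tilde c_{1,k}^{*}:=\mathrm{clip}(c_{2,k},\,c_{1,k}-\epsilon,\,c_{1,k}+\epsilon)$ for each $k\in[d]$. The claim is that every voter $v_j$ satisfied by any admissible $\tilde c_1$ is also satisfied by $\tilde c_1^{*}$. The key geometric observation is that, in each coordinate $k$, $\tilde c_{1,k}^{*}$ lies on the closed interval with endpoints $\tilde c_{1,k}$ and $c_{2,k}$: indeed $\tilde c_{1,k}^{*}$ is the point of $[c_{1,k}-\epsilon,c_{1,k}+\epsilon]$ closest to $c_{2,k}$, and $\tilde c_{1,k}$ also lies in that interval. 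Since $\tilde c_{1,k}^{*}$ sits between $\tilde c_{1,k}$ and $c_{2,k}$, one has $|\tilde c_{1,k}^{*}-v_{j,k}|\leq \max(|\tilde c_{1,k}-v_{j,k}|,\,|c_{2,k}-v_{j,k}|)$. Taking the maximum over $k$ yields $\|\tilde c_1^{*}-v_j\|_\infty\leq \max(\|\tilde c_1-v_j\|_\infty,\,r_j)$, which is at most $r_j$ whenever $\tilde c_1$ already satisfies $v_j$. Hence checking $\tilde c_1^{*}$ alone decides the problem.

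The main obstacle is merely being careful with the casework behind the inequality above: the three cases for where $c_{2,k}$ falls relative to $[c_{1,k}-\epsilon,c_{1,k}+\epsilon]$, and the positions of $v_{j,k}$ within or outside the interval spanned by $\tilde c_{1,k}$ and $c_{2,k}$, all need to be covered, but each reduces to elementary one-dimensional reasoning. The algorithmic conclusion is then immediate: compute $\tilde c_1^{*}$ in $O(d)$, compute $r_j$ and $\|\tilde c_1^{*}-v_j\|_\infty$ for each of the $m$ voters in $O(d)$ time apiece, tally the preferences, and apply adversary-favoring tiebreaking. This yields total running time $O(md)$, independent of the particular scoring rule.
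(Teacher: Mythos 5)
Your proposal is correct and follows essentially the same route as the paper's proof: reduce the two-candidate case to winning at least half the voters, exhibit the same closed-form manipulation (clipping each coordinate of $c_2$ into $[c_{1,k}-\epsilon,\,c_{1,k}+\epsilon]$), and show it dominates every feasible $\tilde c_1$ voter by voter. Your justification of the dominance step---that the clipped coordinate lies between $\tilde c_{1,k}$ and $c_{2,k}$, so $|\tilde c_{1,k}^{*}-v_{j,k}|\le\max(|\tilde c_{1,k}-v_{j,k}|,|c_{2,k}-v_{j,k}|)$---is a cleaner packaging of the paper's explicit sign-and-position case analysis, but the underlying argument and the $O(md)$ algorithm are the same.
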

The proof is provided in Supplement~\ref{S:supp_rvpm_cc_2}.
Note that our results do not resolve the question of constructive control with $p < \infty$; we leave it as an open question.

Next, we consider two restrictions of \textsc{RVPM}: 1) assuming a constant number of issues, and 2) assuming a constant number of distinct voter opinions.
In all these restricted cases, we show how to solve \textsc{RVPM} in polynomial time for $l_2$ and $l_\infty$ norm.
We leave the problem open for arbitrary $l_p$ norms.

\subsection{Constant Number of Issues}
\label{S:rvpm_const_issues}

When the number of issues is constant, we show that RVPM is tractable for $l_2$ and $l_\infty$ norms (our focus on these two norms follows the precedent from prior literature~\cite{Crama95,Crama97}).
Note that unlike with binary issues, tractability of \textsc{RVPM} with a constant number of issues is \emph{non-trivial} since the issue space is continuous and cannot be exhaustively searched in finite time.

\subsubsection{Constructive Control}

We start by studying the problem of constructive control, with \textsc{RVPM} in that case closely related to the well-known \emph{product positioning} and \emph{ball intersection problems}~\cite{Crama95}.
 The goal in product positioning is to find $x \in \mathbb{R}^m$ that maximizes the number of consumers for whom $x$ is closer (in $l_p$) to their ideal product than any of the competitors.
The ball intersection problem aims to maximize the weighted sum of $l_p$ balls to which $x$ belongs.
Neither exactly captures our problem given the presence of the attacker budget constraint and different scoring scenarios, but both are useful tools in constructing the algorithms for our problem below. 

\begin{theorem}
\label{T:rvpm_c_ci}
\textsc{RVPM} can be solved in polynomial time under $l_2$ norm when the number of issues is constant for constructive control for arbitrary scoring rules.
\end{theorem}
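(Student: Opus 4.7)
The plan is to exploit the polynomial complexity of sphere arrangements in fixed dimension. For each voter $v_j$ and each non-target candidate $c_i$ with $i \geq 2$, define the closed ball
\[
B_{ij} = \{x \in \mathbb{R}^d : \|x - v_j\|_2 \leq \|c_i - v_j\|_2\},
\]
which is exactly the set of positions at which voter $v_j$ would rank $\tilde{c}_1$ no worse than $c_i$. Also let $B_0 = \{x : \|x - c_1\|_2 \leq \epsilon\}$ be the budget region. Since ties are broken in the adversary's favor, taking these balls closed correctly handles every boundary case.

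The key observation is that the membership vector $(\mathbf{1}[\tilde{c}_1 \in B_{ij}])_{i \geq 2,\, j \in [m]}$ is constant throughout any full-dimensional cell of the arrangement of the $m(n-1)$ spheres $\{\partial B_{ij}\}$. Consequently, for every voter $v_j$, the rank of $\tilde{c}_1$ --- namely $n - \sum_{i \geq 2}\mathbf{1}[\tilde{c}_1 \in B_{ij}]$ after favorable tiebreaking --- is constant on each cell, and so is the total score $s_1(\tilde{c}_1) = \sum_j f(\mathrm{rank}_j(\tilde{c}_1))$ for any scoring rule $f$. Because the opponents' scores do not depend on $\tilde{c}_1$, the Boolean predicate ``$c_1$ wins'' is constant per cell, so it suffices to examine one representative point per nonempty cell inside $B_0$.

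The algorithm is therefore: enumerate the cells of the arrangement of the $N = m(n-1) + 1$ spheres $\{\partial B_{ij}\} \cup \{\partial B_0\}$, extract a representative point in each cell that lies in $B_0$, compute the resulting scores of all candidates, and accept iff some representative makes $c_1$ win. For constant $d$, an arrangement of $N$ spheres in $\mathbb{R}^d$ has $O(N^d)$ cells, all of which can be listed with representative points in polynomial time via standard computational geometry. An elementary route suffices: any score-maximizing $\tilde{c}_1$ can, after an infinitesimal perturbation, be taken to be pinned by at most $d$ of the surfaces in $\{\partial B_{ij}\} \cup \{\partial B_0\}$, so enumerating all $O(N^d)$ such $d$-tuples produces a candidate set containing at least one winning placement (if one exists), which can then be scored in $O(mn\log n)$ time apiece.

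The main obstacle is the bookkeeping of the continuous enumeration: extracting a representative point from each cell, enforcing the budget constraint, and picking the correct side of each binding sphere via a $\pm$ perturbation so that the adversary-favorable tiebreak is correctly reflected. These steps are routine but need care because the system of $d$ simultaneous quadratic equations defining each candidate point must be handled symbolically to cope with degeneracies. Once the arrangement structure is in place, correctness follows from the piecewise-constant nature of the score function, and the overall running time is polynomial in $m$ and $n$ with exponent depending only on $d$.
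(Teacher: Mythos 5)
Your overall strategy is the same as the paper's: reduce the continuous search to a polynomial-size (for fixed $d$) family of candidate points arising from intersections of the relevant $l_2$ spheres, then evaluate the election at each. The paper does this with one ball per voter per \emph{unique score value} (radius $d_j^{s_l}$), reduces the winning scenarios via Lemma~\ref{constructiverep}, and invokes the ``representative points of intersections'' machinery of Crama et al.\ on $\{B_c\}\cup\bigl(\bigcup_j\bigcup_l\{B_j^{s_l}\}\bigr)$; you use one ball per voter--rival pair and argue coverage via the arrangement of the resulting spheres. That substitution is fine and even avoids needing Lemma~\ref{constructiverank}/\ref{constructiverep} explicitly, since you check the actual outcome at each representative point.

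Two points need repair, one of them substantive. First, the coverage argument as stated would fail in degenerate configurations: you justify correctness by constancy of the membership vector on \emph{full-dimensional} cells, but because the balls are closed and ties favor the adversary, a winning membership pattern may be realized \emph{only} on a lower-dimensional piece of the arrangement (e.g.\ two balls tangent externally, or a pattern attainable only on $\partial B_0$), and no full-dimensional cell carries it. Your ``elementary route'' of solving $d$-tuples of sphere equations does not fully close this either: when the maximizer is pinned by $k<d$ spheres whose common intersection is a positive-dimensional manifold meeting no further sphere, the exact-$d$ systems give you nothing on that manifold. What is needed is a canonical representative point for \emph{every} subset of at most $d$ spheres with nonempty connected intersection (plus the at-most-two points when the intersection is finite) --- which is precisely the construction of Crama et al.\ that the paper cites; with that substitution your argument goes through. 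Second, a minor slip: the opponents' scores \emph{do} depend on $\tilde{c}_1$, since inserting $\tilde{c}_1$ at rank $r_j$ pushes the rivals ranked below it down one position; your conclusion that the predicate ``$c_1$ wins'' is constant per face still holds, because the entire ranking profile (and hence every candidate's score) is determined by the membership vector, but the justification should be stated that way.
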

\begin{proof}

We first convert our problem into the ball intersection problem.
Let $B_j^{t_j}$ be the ball that corresponds to voter $v_j$, with radius $d_{j}^{t_j}$ as defined in the constructive control variant of \textsc{BVPM}, that is, 
\[
B_{j}^{t_j} = \{\tilde{c}_1 \in \mathbb{R}^d\mid ||\tilde{c}_1 - v_j||_2 \leq d_{j}^{t_j}\}.
\]
Similarly, we define the candidate budget ball 
\[
B_{c} = \{\tilde{c}_1 \in \mathbb{R}^d \mid||\tilde{c}_1 - c_1||_2 \leq \epsilon\}.
\]
Since the tie breaks in the adversary's favor, $\tilde{c}_1$ will receive a score of at least $f(t_j)$ from voter $v_j$ iff $\tilde{c}_1$ falls within $B_j^{t_j}$ and $B_c$.
According to Lemma \ref{constructiverep} (which also applies when issues are real-valued), by  finding a representative point $\tilde{c}_1$ %
within ${\{B_c\} \cup (\bigcup_{j=1}^m \{B_j^{t_j}\})}$ for all scenarios of $(t_1,\ldots,t_m)$ with $t_{j}\in\{s_1,\ldots,s_r\}$ (partition of the domain of $f$ based on unique values as defined in the constructive control variant of BVPM), $\forall j \in [m]$, we cover all the scenarios of ${c}_1$ winning. %
We can now directly apply the ball intersection algorithm by \citeauthor{Crama95}~\shortcite{Crama95} for $l_2$ and constant $d$ to our problem only once for ${\{B_c\}\cup (\bigcup_{j=1}^m \bigcup_{l=1}^r \{B_j^{s_l}\})}$, the resulting set $P$ (representative points of intersections) contains a representative point for all the scenarios.
We check for each point in $P$ whether it is within $B_c$ and wins the election. 
The time complexity of the algorithm is exponential only in $d$.%
\end{proof}

A similar result, based on a similar connection to box intersection, can be obtained for the $l_\infty$ norm. 

\begin{theorem}
\label{T:rvpm_linf_cd}
\textsc{RVPM} can be solved in polynomial time under $l_\infty$ norm when the number of issues is constant for constructive control for arbitrary scoring rules.
\end{theorem}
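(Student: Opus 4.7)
The plan is to mirror the proof of Theorem~\ref{T:rvpm_c_ci}, replacing Euclidean balls with axis-aligned boxes. Under the $l_\infty$ norm, each set $B_j^{t_j} = \{\tilde{c}_1 \in \mathbb{R}^d : \|\tilde{c}_1 - v_j\|_\infty \le d_j^{t_j}\}$ is an axis-aligned hypercube centered at $v_j$ with side length $2 d_j^{t_j}$, and similarly $B_c$ is an axis-aligned hypercube centered at $c_1$ with side length $2\epsilon$. Since Lemma~\ref{constructiverep} does not rely on the choice of norm or on the issues being binary, to cover every way that $c_1$ could win it suffices to decide, for some tuple $(t_1,\ldots,t_m)$ with $t_j \in \{s_1,\ldots,s_r\}$, whether there exists $\tilde{c}_1 \in B_c \cap \bigcap_{j=1}^m B_j^{t_j}$ whose induced ranks yield a winning score for $c_1$.

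Rather than enumerate the $r^m$ tuples separately, I would work with the arrangement of the entire collection $\mathcal{B} = \{B_c\} \cup \{B_j^{s_l} : j \in [m],\ l \in [r]\}$ of at most $mr+1$ axis-aligned boxes. The boundaries of these boxes contribute at most $O(mr)$ distinct coordinate values along each of the $d$ axes, so their common refinement partitions $\mathbb{R}^d$ into $O((mr)^d)$ axis-aligned cells, a bound that is polynomial since $d$ is constant. Pick one representative point from each (closed) cell to form a set $P$. Within a single cell the containment pattern---which boxes of $\mathcal{B}$ contain the cell---is fixed, so each representative encodes the whole family of ranks $c_1$ would receive from the voters at any point of that cell.

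For each $\tilde{c}_1 \in P$ I would then check (i) whether $\tilde{c}_1 \in B_c$, and if so (ii) for each voter $v_j$ compute the largest index $l$ with $\tilde{c}_1 \in B_j^{s_l}$, which fixes the score band $f(s_l)$ that $c_1$ receives from $v_j$; the scores of the other candidates are unchanged. Summing gives $c_1$'s total score, and we report success as soon as $c_1$ wins under the given scoring rule. Correctness follows from Lemma~\ref{constructiverep} together with the observation that any feasible placement of $\tilde{c}_1$ has the same containment pattern as the representative of its cell, and therefore yields the same collection of scores. The running time is $O((mr)^d)$ cells times the $O(nd)$ cost of evaluating a placement and checking the winner, which is polynomial for constant $d$.

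The one subtlety I expect to need care with is the treatment of points lying exactly on box boundaries. Because ties are broken in the adversary's favor, containment in each $B_j^{t_j}$ and in $B_c$ must be tested with closed inequalities, and the representatives should be chosen so that boundary incidences are realized by some representative in $P$ (for instance, by selecting a vertex of the arrangement whenever a cell's closure meets the boundary of interest). With this convention, the enumeration is exhaustive and matches the tie-breaking rule, completing the algorithm.
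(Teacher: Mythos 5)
Your proposal is correct and follows essentially the same route as the paper's proof: reduce, via Lemma~\ref{constructiverep}, all winning scenarios to containment in the family of $m\cdot|f_{\text{uniq}}|+1$ axis-aligned hypercubes $\{B_c\}\cup\{B_j^{s_l}\}$, enumerate representative points of that arrangement once, and directly check for each candidate point whether it lies in $B_c$ and wins the election. The only real difference is how the representatives are produced: the paper invokes the box-intersection machinery it cites (Crama et al.; Imai--Asano; Lee), which yields roughly $O((mr)^{d-1})$ work for $d\ge 3$, while you build the full coordinate grid of box boundaries, giving $O((mr)^{d})$ candidate points. Your version is more elementary and self-contained, at the cost of a slightly larger but still polynomial bound for constant $d$; your handling of boundary points via closed containment is also the right way to respect favorable tie-breaking. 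One small slip to fix in the write-up: since $d_j^{s_1}\le\cdots\le d_j^{s_r}$ the boxes $B_j^{s_1}\subseteq\cdots\subseteq B_j^{s_r}$ are nested, so the score band $c_1$ earns from $v_j$ is $f(s_l)$ for the \emph{smallest} index $l$ with $\tilde{c}_1\in B_j^{s_l}$, not the largest; this does not affect correctness because your final step simulates the election at each representative point anyway, and same containment pattern implies the same scores by the band-equivalence argument in Lemma~\ref{constructiverep}.
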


\subsubsection{Destructive Control}

Next we study the problem under destructive control.
Define an open voter ball corresponding to voter $v_j$ with radius $d_{j}^{t_j}$ under $l_2$ norm
as \({\mathring{B}_{j}^{t_j}=\{\tilde{c}_1 \in \mathbb{R}^d\mid ||\tilde{c}_1 - v_j||_2 < d_{j}^{t_j}\},}\)
and recall that $d_{j}^{t_j}$ is the distance between $v_j$ and the candidate ranked $t_j$-th closest to it.
The next lemma, proved in the Supplement, provides an important building block.
\begin{lemma}
     Given $k$ open balls $\{\mathring{B}_1,\ldots,\mathring{B}_k\}$ and a closed ball $B_c$, let $P$ be the \emph{representative points of intersections} (defined in \citeauthor{Crama95}~\shortcite{Crama95}) w.r.t. $\{B_c, B_1,\ldots,B_k\}$, where $B_j$ is the closed ball corresponds to $\mathring{B}_j,~\forall {j\in[k]}$. For any given family of open balls
     $\{\mathring{B}_{{i_1}},\ldots,\mathring{B}_{{i_r}}\}\subseteq \{\mathring{B}_1,\ldots,\mathring{B}_k\}$, let ${P' = \{x\mid x \notin \mathring{B}_{j},} \forall j \in \{i_1,\ldots,i_r\}, x \in B_c\}$. If ${P' \neq \emptyset}$, then $P\cap P'\neq \emptyset$.%
    \label{avoid_ball}
\end{lemma}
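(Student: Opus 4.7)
The plan is to exploit the defining property of Crama--Ibaraki representative points: $P$ contains at least one representative in every non-empty cell of the arrangement formed by the closed balls $\{B_c, B_1, \ldots, B_k\}$. It will therefore suffice to exhibit a non-empty cell $C$ contained entirely in $P'$, that is, with $C \subseteq B_c$ and $C \cap B_{i_\ell} = \emptyset$ for every $\ell \in \{1,\ldots,r\}$; the representative $p \in P$ of this cell will then automatically lie in $P \cap P'$.

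I would start from any witness $x_0 \in P'$ and record its closed-ball membership pattern $S = \{j \in \{c,1,\ldots,k\} : x_0 \in B_j\}$, noting that $c \in S$. For each $\ell$, the condition $x_0 \notin \mathring{B}_{i_\ell}$ means that either $i_\ell \notin S$ (strict miss) or $x_0 \in \partial B_{i_\ell}$ (boundary coincidence). In the generic case, where no boundary coincidence with any $B_{i_\ell}$ occurs, $x_0$ lies in the interior of the open cell with pattern $S$. This cell is non-empty and, by construction, satisfies $c \in S$ and $i_\ell \notin S$ for every $\ell$, so its Crama--Ibaraki representative $p$ lies in $B_c$ and outside each $B_{i_\ell} \supseteq \mathring{B}_{i_\ell}$, giving $p \in P \cap P'$ as required.

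The remaining step is to reduce the general situation to this generic case when $x_0$ has one or more boundary coincidences. I would perturb $x_0$ by a small step in a direction $u$ chosen so that $\langle u, \, x_0 - v_{i_\ell}\rangle > 0$ for every $\ell$ with $x_0 \in \partial B_{i_\ell}$ (so the distance to $v_{i_\ell}$ strictly increases and the perturbed point escapes $B_{i_\ell}$), combined with $\langle u, \, c_1 - x_0\rangle \geq 0$ when $x_0 \in \partial B_c$ (so that $x_0$ remains in $B_c$). The main obstacle is verifying that the resulting cone of admissible directions is non-empty whenever the hypothesis $P' \neq \emptyset$ holds: one must rule out the pathological case in which $x_0$ is pinned simultaneously against $\partial B_c$ and several $\partial B_{i_\ell}$ with no feasible escape. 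I would handle this by a Farkas-type argument on the defining half-spaces, showing that if no feasible $u$ existed at any point of $P'$, then $P'$ would have to be empty, contradicting the assumption. Granted this, finitely many such perturbations push the witness into the interior of its cell, at which point the earlier argument applies verbatim.
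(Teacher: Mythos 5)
There is a genuine gap, and it sits at the very first step: you assume that the Crama--Ibaraki set $P$ ``contains at least one representative in every non-empty cell of the arrangement formed by the closed balls.'' That is not what the construction in Crama et al.\ provides. As used in the paper, $P$ is built from intersections of the boundary spheres (a point of $\bigcap_{j\in F} S_j$ for every small subfamily $F$ whose sphere intersection is connected, plus the full intersection when it has at most two points), and the theorem accompanying it only guarantees representatives for intersections of \emph{closed balls}, i.e.\ for the covering/maximization problem. The claim that $P$ also hits every arrangement cell --- cells defined by membership in some balls \emph{and exclusion from others} --- is essentially the content of the lemma you are asked to prove (it is the ``avoid-ball'' extension, handled in the paper via an extremal choice of witness: minimize the set $H$ of open balls containing $x$, then maximize the set $F$ of spheres through $x$, and walk along the connected sphere intersection $\bigcap_{j\in F}S_j$ to a point of $P$ without crossing any new boundary). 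Taking that property as given makes the argument circular.

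The second problem is that your reduction to the ``generic case'' by perturbation is not just unproven but can fail outright. The feasible set $P'$ may consist entirely of boundary points: for instance, if $B_c$ is internally tangent to some $B_{i_\ell}$ with $B_c\subseteq B_{i_\ell}$, then $P'$ is exactly the tangency point, every admissible direction $u$ keeps you inside $\mathring{B}_{i_\ell}$, and there is \emph{no} non-empty cell with the membership pattern you need ($c\in S$, $i_\ell\notin S$). Your proposed Farkas-type rescue cannot work here, because its premise (``no feasible escape direction at any point of $P'$ implies $P'=\emptyset$'') is simply false in this configuration, while the lemma itself still holds --- the tangency point lies in $S_c\cap S_{i_\ell}$, which has at most two points and is therefore included in $P$ by construction. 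This is precisely why the paper's proof works with points lying \emph{on} the spheres (exclusion from open balls permits boundary contact) rather than trying to push the witness into the interior of a cell.
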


Next, we show that when the number of issues is constant, the destructive control variant of \textsc{RVPM} is tractable.
\begin{theorem}
\label{T:rvpm_d_ci}
When the number of issues is constant, the destructive control variant of \textsc{RVPM} can be solved in polynomial time under $l_2$ norm for arbitrary scoring rules.
\end{theorem}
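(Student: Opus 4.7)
The plan is to mirror the constructive algorithm in Theorem~\ref{T:rvpm_c_ci} but replace ball containment with \emph{open-ball avoidance}. The starting observation is that with destructive tie-breaking (ties break against $c_1$), the target candidate $c_1$'s rank from voter $v_j$ is at least $r_j$ if and only if $\tilde{c}_1 \notin \mathring{B}_j^{r_j - 1}$, the open ball of radius $d_j^{r_j-1}$ around $v_j$.

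Before invoking any geometry, I would prove a destructive counterpart of Lemma~\ref{constructiverank}: if $c_1$'s rank from $v_j$ is pushed from $r_j$ to $r_j+1$ (with all other $r_{j'}$ fixed), then $c_1$'s score drops by $\delta := f(r_j) - f(r_j+1) \geq 0$, while precisely one rival (the candidate originally immediately behind $c_1$ at voter $v_j$) has its rank improved by one and its score raised by $\delta$; every other candidate's score is unchanged. It follows that if $c_1$ loses at $(r_1,\ldots,r_m)$, then $c_1$ also loses at every coordinatewise-dominating configuration $(r'_1,\ldots,r'_m)$ with $r'_j \geq r_j$ for all $j$.

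The algorithm itself then closely mirrors the constructive case: form the $m(n-1)$ closed voter balls $B_j^k$ together with the budget ball $B_c$, invoke the $l_2$ ball-intersection algorithm of Crama~\shortcite{Crama95} once (polynomial time for constant $d$) to obtain a polynomial-sized set $P$ of representative points, and for each $x \in P \cap B_c$ compute $c_1$'s induced rank from each voter (breaking distance ties against $c_1$), evaluate all candidates' scores, and check whether $c_1$ loses. For correctness, suppose a valid manipulation $\tilde{c}_1^*$ exists with induced ranks $(r_1^*, \ldots, r_m^*)$ making $c_1$ lose; applying Lemma~\ref{avoid_ball} to the family $\{\mathring{B}_j^{r_j^* - 1}\}_{j=1}^m$ produces some $x \in P \cap B_c$ that also avoids these open balls, so $c_1$'s rank at $x$ from each $v_j$ is at least $r_j^*$, and the monotonicity claim above guarantees that $c_1$ still loses at $x$.

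The main obstacle I anticipate is a boundary technicality: representative points from Crama's algorithm typically sit on ball boundaries where distance ties occur, and the argument hinges on the destructive tie-breaking rule aligning exactly with the open-ball avoidance interpretation. Lemma~\ref{avoid_ball} is engineered precisely to bridge this closed-vs.-open gap, so I would lean on its conclusion rather than re-derive the geometry from scratch; by contrast, the constructive case in Theorem~\ref{T:rvpm_c_ci} only needs closed-ball containment and so avoids this subtlety entirely.
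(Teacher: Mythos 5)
Your proposal is correct and follows essentially the same route as the paper: recast destructive control as avoiding open voter balls inside the budget ball, invoke Lemma~\ref{avoid_ball} to get that the representative points of intersections (from the \citeauthor{Crama95}~\shortcite{Crama95} construction over the closed balls plus $B_c$) hit every nonempty avoidance region, and check each representative point directly, with the destructive monotonicity lemma (the analogue of Lemma~\ref{constructiverank}, proved in the paper's supplement) justifying that a point with coordinatewise-worse ranks for $c_1$ still makes $c_1$ lose. The only inessential difference is that you build $P$ from all $m(n-1)$ voter balls rather than the $m\cdot|f_{\text{uniq}}|$ threshold balls $B_j^{s_l}$ the paper uses via its scenario enumeration, which does not affect correctness or polynomiality for constant $d$.
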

\begin{proof}
Since the tie breaks in the adversary's favor, within $B_c$, $\tilde{c}_1$ will get a score of no more than $f(t_j)$ from voter $v_j$ iff it falls outside of $\mathring{B}_j^{t_j}$. Similar to Theorem \ref{T:rvpm_c_ci}, we cover all the scenarios of ${c}_1$ losing by finding the set that contains a representative point within $B_c$ that falls outside of $\bigcup_{j=1}^m \{\mathring{B}_j^{t_j}\}$ for all scenarios of $(t_1,\ldots,t_m)$ with $t_{j}\in\{s_1,\ldots,s_r\}$ (partition of the domain of $f$ based on unique values as defined in the destructive control variant of BVPM), $\forall j \in [m]$. Lemma \ref{avoid_ball} shows us that the set $P$ (representative points of intersections) for the family of balls $\{B_c\}\cup (\bigcup_{j=1}^m \bigcup_{l=1}^r \{B_j^{s_l}\})$ contains a representative point for all the scenarios.
The problem could be solved in polynomial time with minor modifications to the 
algorithm in Theorem \ref{T:rvpm_c_ci}. %
\end{proof}

The destructive control problem with $l_\infty$ norm involves solving a non-convex feasibility problem.
The next lemma shows that the problem of relevance can nevertheless be solved in polynomial time.
\begin{lemma}
For the feasibility problem 
\begin{subequations}
\begin{align}
    &||\tilde{y}-y||_\infty \leq \epsilon \label{budget}\\
    &||\tilde{y}-a_i||_\infty \geq b_i, \quad i \in [k] \label{cube_c} \\
    & \epsilon>0, b_i > 0\quad \tilde{y}, y, a_i \in \mathbb{R}^d,
\end{align}
\end{subequations}
if $\tilde{y}\in \mathbb{R}^d$ satisfy constraint (\ref{budget}) and $\cup_{j=1}^d S_j(\tilde{y}_j) = [k]$, where ${S_j(\tilde{y}_j) = \{i\in[k]\mid  |\tilde{y}_j-a_{i,j}| \geq b_i\}}$, $\forall j\in[d]$, then $\tilde{y}$ is a solution to the feasibility problem. Moreover, for all $j \in [d]$, let set ${P}_j = \bigcup_{i=1}^k({\{-b_i+a_{i,j}, b_i + a_{i,j}\}}\cap {[-\epsilon +y_j,\epsilon +y_j]})$, or ${P}_j = \{y_j\}$ if the set is empty, then ${P=\{p\in \mathbb{R}^d\mid p_j\in {P}_j, \forall j \in [d]\}}$ contains a representative solution point $\tilde{y}$ to the problem. 
\label{non_convex}
\end{lemma}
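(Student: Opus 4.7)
The proof has two parts, corresponding to the two assertions.

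Part 1 is a direct translation of the $\ell_\infty$ norm: $\|\tilde y - a_i\|_\infty \geq b_i$ holds iff some coordinate $j$ satisfies $|\tilde y_j - a_{i,j}|\geq b_i$, i.e., $i\in S_j(\tilde y_j)$. Hence the constraints (\ref{cube_c}) hold for every $i\in[k]$ precisely when $\bigcup_j S_j(\tilde y_j)=[k]$; together with (\ref{budget}) this yields a solution.

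For Part 2, given any feasible $\tilde y$ I will build a solution $\tilde y'\in P$ coordinate-by-coordinate, preserving each coverage set so that $S_j(\tilde y_j)\subseteq S_j(\tilde y'_j)$. Fix $j$ and let
\[
F_j \;=\; [y_j-\epsilon,\,y_j+\epsilon]\;\setminus\;\bigcup_{i\in S_j(\tilde y_j)}\bigl(a_{i,j}-b_i,\;a_{i,j}+b_i\bigr),
\]
which, as a finite union of closed intervals, contains $\tilde y_j$. Let $I=[L,R]$ be the maximal closed sub-interval of $F_j$ containing $\tilde y_j$. Each endpoint of $I$ is either a budget endpoint ($y_j-\epsilon$ or $y_j+\epsilon$) or, in the remaining case, forced by an abutting forbidden interval to take the form $a_{i,j}\pm b_i$ for some $i\in S_j(\tilde y_j)$, and hence to lie in $P_j$.

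The case analysis is then: if $L>y_j-\epsilon$ set $\tilde y'_j=L$; if instead $R<y_j+\epsilon$ set $\tilde y'_j=R$; in either situation $\tilde y'_j\in P_j\cap F_j$. Otherwise $I=[y_j-\epsilon,y_j+\epsilon]$, and since every forbidden interval is \emph{open}, none of them can intersect $[y_j-\epsilon,y_j+\epsilon]$ at all (any intersection would contain an open subset of the budget interval, which would truncate $I$). Consequently $F_j=[y_j-\epsilon,y_j+\epsilon]$, so every point of $P_j$ already lies in $F_j$; if $P_j=\emptyset$ the lemma's convention gives $P_j=\{y_j\}$, and $y_j\in F_j$ completes the case. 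Performing this replacement in every coordinate produces $\tilde y'\in P$ with $S_j(\tilde y'_j)\supseteq S_j(\tilde y_j)$ for all $j$, so $\bigcup_j S_j(\tilde y'_j)=[k]$ and Part 1 certifies that $\tilde y'$ solves the problem.

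The only delicate point I anticipate is the final sub-case: justifying that $I=[y_j-\epsilon,y_j+\epsilon]$ really does force every forbidden interval in $S_j(\tilde y_j)$ to be disjoint from the budget interval, rather than merely to touch its boundary. Openness of the forbidden intervals handles this cleanly, but it is the step where the "budget endpoint'' subtlety has to be spelled out.
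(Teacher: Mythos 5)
Your proof is correct and takes essentially the same route as the paper's: Part~1 is the coordinatewise reading of the $l_\infty$ constraints, and Part~2 replaces each coordinate of a feasible point by a nearby forbidden-interval endpoint within the budget interval (or uses $y_j$ when no endpoint intrudes), preserving the coverage sets $S_j$ and invoking Part~1. Your maximal-subinterval case analysis is simply a more explicitly justified version of the paper's ``move to the nearest open-interval endpoint'' step.
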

We use Lemma~\ref{non_convex} to show that the destructive control variant of \textsc{RVPM} is also tractable for the $l_\infty$ norm.
\begin{theorem}
\textsc{RVPM} can be solved in polynomial time under $l_\infty$ norm when the number of issues is constant for destructive control for arbitrary scoring rules.
\end{theorem}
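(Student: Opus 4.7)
The plan is to adapt the strategy of Theorem~\ref{T:rvpm_d_ci}, replacing the ball-intersection machinery (Lemma~\ref{avoid_ball}) with the finite-candidate-set construction of Lemma~\ref{non_convex}, which is the $l_\infty$ analogue that we need. The argument breaks into three steps: a destructive-control monotonicity lemma, a reduction to the feasibility problem that Lemma~\ref{non_convex} handles, and enumeration over a polynomial-size candidate set.

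First, I would establish a destructive-control analogue of Lemma~\ref{constructiverank}: if some $\tilde{c}_1 \in B_c$ realizes a scenario $(t_1,\ldots,t_m) \in \{s_1,\ldots,s_r\}^m$ in which every voter $v_j$ gives $c_1$ a score of at most $f(t_j)$ and $c_1$ loses, then any componentwise-worsened scenario $(t_1',\ldots,t_m')$ with $t_j' \geq t_j$ also causes $c_1$ to lose. The reasoning is direct: the other candidates' positions are fixed, so worsening $c_1$'s ranks can only decrease $c_1$'s own score while either leaving other candidates' scores unchanged or improving them (those originally below $c_1$ shift up, those above are unaffected). Hence it suffices to find \emph{any} scenario in which $c_1$ can be made to lose.

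Second, under the destructive tie-breaking convention (ties resolved against $c_1$), the condition ``$c_1$'s rank with respect to $v_j$ is at least $t_j$'' translates exactly into the closed $l_\infty$-ball avoidance constraint $\|\tilde{c}_1 - v_j\|_\infty \geq d_j^{\,t_j - 1}$, where $d_j^{\,k}$ denotes the distance from $v_j$ to its $k$-th closest candidate among $c_2,\ldots,c_n$ (with $d_j^{\,0}=0$ yielding a vacuous constraint). I would then invoke Lemma~\ref{non_convex} \emph{once}, with $y = c_1$, budget $\epsilon$, centers $a = v_j$, and the full family of radii $b = d_j^{\,s_l - 1}$ indexed by $(j,l) \in [m] \times \{l : s_l \geq 2\}$. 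The resulting candidate set $P$ has $|P_k| = O(mr)$ per coordinate, so $|P| = O((mr)^d)$, polynomial in $m$ and $n$ since $d$ is constant.

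The algorithm then iterates over $P$: for each $\tilde{c}_1 \in P \cap B_c$, compute distances to all voters, determine the induced ranking and score of every candidate, and check whether $c_1$ loses. Correctness follows because any solution $\tilde{c}_1^*$ induces a specific scenario $(t_1^*,\ldots,t_m^*)$, namely taking $t_j^*$ to be the level in $\{s_1,\ldots,s_r\}$ matching $c_1$'s score from $v_j$ under $\tilde{c}_1^*$; Lemma~\ref{non_convex} applied to this scenario's constraints produces a representative solution in the scenario-specific set $P^* \subseteq P$, and the monotonicity step ensures $c_1$ still loses at that representative, even if its ranks there are strictly worse than $t_j^*$ for some voters. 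The main technical subtlety is reconciling the strict-versus-nonstrict inequalities implicit in destructive tie-breaking with the $\geq$ form required by Lemma~\ref{non_convex}: using $t_j - 1$ (rather than $t_j$) inside the radius aligns the adversary-favored tie-breaking direction with the closed-ball constraint, converting what superficially looks like an open-ball avoidance problem into the closed-ball avoidance that Lemma~\ref{non_convex} handles directly.
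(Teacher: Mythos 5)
Your proposal is correct and follows essentially the same route as the paper: enumerate score scenarios $(t_1,\ldots,t_m)$ over the unique values of $f$ (justified by the destructive analogues of Lemmas~\ref{constructiverank} and~\ref{constructiverep}), encode each scenario as the hypercube-avoidance feasibility problem $\|\tilde{c}_1-c_1\|_\infty\le\epsilon$, $\|\tilde{c}_1-v_j\|_\infty\ge d_j^{t_j}$, and apply Lemma~\ref{non_convex} once to the full family of $m\cdot|f_{\text{uniq}}|$ hypercubes to get a per-coordinate candidate set of size $O(m\,|f_{\text{uniq}}|)$, hence a representative set of polynomial size for constant $d$ that is checked exhaustively. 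Your index shift $d_j^{t_j-1}$ is just the paper's definition of $d_j^{t_j}$ in the destructive variant of \textsc{BVPM}, and your tie-breaking and monotonicity discussion matches the paper's argument.
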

\begin{proof}
We solve the problem by using Lemma \ref{non_convex} to find the set $P$ which contains a representative solution point for all the scenarios of $(t_1,\ldots,t_m)$ (defined as in Theorem~\ref{T:rvpm_d_ci}). Each scenario represents finding a $\tilde{c}_1$ within the budget constraint that gets a score of no more than $f(t_j)$ from voter $v_j, j \in [m]$:%
\begin{subequations}
\label{E:lfp}
\begin{align}
    &||\tilde{c}_1 - c_1||_\infty \leq \epsilon\\
    &||\tilde{c}_1 - v_j||_\infty \geq d_j^{t_j}, \quad j\in[m]
\end{align}
\end{subequations}
where $d_{j}^{t_j}$ is defined in the destructive control variant of \textsc{BVPM}. 
Notice that there are in total $m \cdot |f_{\text{uniq}}|$ open hypercubes involved. According to Lemma \ref{non_convex}, the representative solution set $P$ for all scenarios has at most $2m \cdot |f_{\text{uniq}}|$ choices for each dimension. The time complexity of the algorithm is exponential only in $d$.
\end{proof}

\subsection{Constant Number of Distinct Voters}

Next, we turn to the case when the number of distinct voter opinion vectors $v_j$ is bounded by a constant.
As in Section~\ref{S:bvpm}, we simplify the discussion by assuming that the number of voters is constant; generalization to an arbitrary number of voters whose opinions can be grouped into a small set $Q$ of possibilities is straightforward using the same idea as for \textsc{BVPM}.
For $l_2$ norm, we use the ball intersection algorithm as in Section~\ref{S:rvpm_const_issues}.
\begin{theorem}
When the number of voters is constant, the constructive and destructive control variants of \textsc{RVPM} can be solved in polynomial time under the $l_2$ norm for arbitrary scoring rules.
\end{theorem}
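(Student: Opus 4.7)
The plan is to mirror the constant-number-of-voters template already used for \textsc{BVPM}. First I would enumerate the constant number of ``scoring scenarios'' $(t_1,\ldots,t_m)$ with $t_j\in\{s_1,\ldots,s_r\}$ (via Lemma~\ref{constructiverep} for constructive control, and its destructive analog from the supplement), reducing each scenario to a geometric feasibility question on the budget ball $B_c$ and the voter balls $B_j^{t_j}$: intersections for the constructive variant, avoidances for the destructive variant. With $m$ constant the number of scenarios is $|f_{\text{uniq}}|^m\le n^m$, and each scenario involves only $m+1$ balls; the difficulty is that these balls live in $\mathbb{R}^d$ with $d$ arbitrary, so the ball intersection algorithm of \citeauthor{Crama95}~\shortcite{Crama95}, whose cost is exponential in the ambient dimension, cannot be invoked on $\mathbb{R}^d$ directly.

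The crux of the argument is a dimension reduction that exploits the fact that there are only $m+1$ ball centers. Let $H\subseteq\mathbb{R}^d$ be the affine hull of $\{c_1,v_1,\ldots,v_m\}$, so $\dim H\le m$, and fix any unit vector $e_\perp$ orthogonal to $H-H$. Given an arbitrary $x\in\mathbb{R}^d$, decompose it as $x=x_H+x_\perp$ with $x_H\in H$ and $x_\perp$ perpendicular to $H-H$, and define
\[
  \tilde x \;=\; x_H + \|x_\perp\|_2\, e_\perp.
\]
Since every ball center $v$ lies in $H$, we obtain $\|\tilde x-v\|_2^2=\|x_H-v\|_2^2+\|x_\perp\|_2^2=\|x-v\|_2^2$, so $\tilde x$ lies in the $(m+1)$-dimensional affine subspace $H'=H+\mathbb{R}\cdot e_\perp$ and realises identical distances to $c_1$ and to every voter. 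This ``rotation'' thus collapses both the constructive intersection and the destructive avoidance problems into $H'$ without losing any solution, replacing the arbitrary ambient dimension $d$ by the constant $m+1$.

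Within $H'\cong\mathbb{R}^{m+1}$ each scenario is exactly the geometric feasibility problem handled in Theorems~\ref{T:rvpm_c_ci} and~\ref{T:rvpm_d_ci}: I would apply the ball intersection algorithm to the family $\{B_c\}\cup\bigcup_{j=1}^m\bigcup_{\ell=1}^r\{B_j^{s_\ell}\}$ restricted to $H'$, invoking Lemma~\ref{avoid_ball} in the destructive case to extract a point inside $B_c$ that avoids every open forbidden ball whenever one exists. Each representative is then scored against the voting rule and returned if it realises the adversary's goal, giving polynomial total time once an orthonormal basis of $H'$ has been computed in $O(m^2 d)$. The step I expect to be the main obstacle is the destructive case: naive projection onto $H$ only shrinks distances and can push the projected point into a forbidden ball, so the augmented subspace $H'$ together with the norm-preserving rotation is what makes the reduction go through uniformly for both intersection and avoidance problems.
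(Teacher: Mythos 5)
Your proposal is correct, and it follows the same skeleton as the paper: enumerate the $|f_{\text{uniq}}|^m$ scenarios $(t_1,\ldots,t_m)$ via Lemma~\ref{constructiverep} (and its destructive analog), then reduce each scenario to a ball-intersection feasibility problem for the constructive variant and a ball-avoidance problem (via Lemma~\ref{avoid_ball}) for the destructive variant, solved with the machinery of Theorems~\ref{T:rvpm_c_ci} and~\ref{T:rvpm_d_ci}. Where you genuinely add something is in justifying why constant ambient dimension is not needed: the paper simply asserts that ``a variation of the ball intersection algorithm'' handles each scenario in $O(d^3)$ time, leaving unexplained how an algorithm whose cost is exponential in dimension is applied when $d$ is arbitrary. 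Your isometric reduction supplies the missing mechanism: since all $m+1$ ball centers lie in the affine hull $H$ of $\{c_1,v_1,\ldots,v_m\}$, the map $x\mapsto x_H+\|x_\perp\|_2\,e_\perp$ preserves every relevant $l_2$ distance, so both the intersection and the avoidance problems can be solved without loss in the $(m+1)$-dimensional subspace $H'$, where the constant-dimension algorithms apply verbatim. Your point that naive projection onto $H$ would suffice for the constructive case but breaks the destructive constraints (projection only shrinks distances, possibly pushing the point into a forbidden open ball) is precisely the subtlety the paper glosses over, and the augmented coordinate resolves it cleanly; the $O(m^2 d)$ basis computation plus a constant-dimension search is consistent with the paper's claimed per-scenario cost. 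The only detail worth a sentence is the degenerate case $d\le m+1$, where no orthogonal direction $e_\perp$ exists, but there the dimension is already constant and the algorithms of Theorems~\ref{T:rvpm_c_ci} and~\ref{T:rvpm_d_ci} apply directly, so this is not a gap.
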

\begin{proof}
For constructive control, we solve the problem for each scenario of $(t_1,\ldots,t_m)$ (defined as in Theorem~\ref{T:rvpm_c_ci}) through finding a representative point within $\{B_c\} \cup (\bigcup_{j=1}^m \{B_j^{t_j}\})$, which takes $O(d^3)$ time using a variation of the ball intersection algorithm by \citeauthor{Crama95}~\shortcite{Crama95} as in Theorem~\ref{T:rvpm_c_ci}. 
Since there are in total $O({|f_{\text{uniq}}|}^m)$ scenarios, the problem can be solved in polynomial time. For destructive control, a similar argument holds and the problem can be solved by using a variation of the ball intersection algorithm as in Theorem \ref{T:rvpm_d_ci}.
\end{proof}
For $l_\infty$ norm, the constructive case can be solved by the application of linear programming.
\begin{theorem}
 When the number of voters is constant, the constructive control variant of \textsc{RVPM} can be solved in polynomial time under the $l_\infty$ norm for arbitrary scoring rules.
\end{theorem}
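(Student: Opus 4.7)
The plan is to reduce the problem, scenario by scenario, to a simple linear feasibility question, exploiting the fact that $l_\infty$ balls are axis-aligned hypercubes. By Lemma~\ref{constructiverep} (which depends only on monotonicity of $f$ and thus applies equally to real-valued issues), it suffices to enumerate all ``target ranking'' vectors $(t_1, \ldots, t_m)$ with $t_j \in \{s_1, \ldots, s_r\}$ for each voter, and for each such vector check whether there exists a $\tilde{c}_1$ that (a) stays within the budget and (b) achieves distance at most $d_j^{t_j}$ to voter $v_j$ for every $j \in [m]$. Because $m$ is constant, the total number of such scenarios is $|f_{\text{uniq}}|^m \leq n^m$, which is polynomial in the input size.

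For a fixed scenario, the feasibility problem to be solved is
\begin{align*}
  \|\tilde{c}_1 - c_1\|_\infty &\leq \epsilon, \\
  \|\tilde{c}_1 - v_j\|_\infty &\leq d_j^{t_j}, \quad j \in [m].
\end{align*}
Under the $l_\infty$ norm each of these constraints is equivalent to $2d$ linear inequalities describing an axis-aligned hypercube, so the system is a linear program (in fact a pure box-feasibility problem) with $d$ variables and $O(md)$ linear constraints. Crucially, the constraints decouple across coordinates: for each dimension $k \in [d]$ we need only check whether the intersection of the intervals $[c_{1,k}-\epsilon,\,c_{1,k}+\epsilon]$ and $[v_{j,k}-d_j^{t_j},\,v_{j,k}+d_j^{t_j}]$ over $j$ is nonempty, and if so pick any point in it. Thus each scenario is resolved in $O(md)$ time without invoking general LP machinery.

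After producing a candidate $\tilde{c}_1$ for a given scenario, we verify in $O(nd + n\log n)$ time that it indeed wins (breaking ties in the adversary's favor). If any scenario succeeds, we return that $\tilde{c}_1$; otherwise, by Lemma~\ref{constructiverep}, no winning manipulation exists. Precomputing the distances $d_j^{t_j}$ for every voter $v_j$ and every relevant rank takes $O(nmd + nm\log n)$ time in total, and the overall running time is $O\!\left(nmd + nm\log n + |f_{\text{uniq}}|^m(md + n)\right)$, which is polynomial when $m$ is constant.

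The only subtlety I anticipate is a bookkeeping one: confirming that the $l_\infty$ tie-breaking convention (ties resolved in the adversary's favor) lines up with the closed-ball formulation, so that the $\leq d_j^{t_j}$ constraint is the correct relaxation for constructive control. This is already handled implicitly in the analogous closed-ball arguments used for $l_2$ in Theorem~\ref{T:rvpm_c_ci}, and carries over unchanged here. No deep geometric machinery like the ball/box intersection algorithm of~\citeauthor{Crama95}~\shortcite{Crama95} is required in this regime, because the exponential blow-up in the number of scenarios is already controlled by the constant-$m$ assumption.
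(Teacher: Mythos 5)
Your proposal is correct and follows essentially the same route as the paper: enumerate the $|f_{\text{uniq}}|^m$ scenarios $(t_1,\ldots,t_m)$ via Lemma~\ref{constructiverep}, and for each scenario solve the box-feasibility system $\|\tilde{c}_1-c_1\|_\infty\le\epsilon$, $\|\tilde{c}_1-v_j\|_\infty\le d_j^{t_j}$, which the paper handles either as a small LP or, exactly as you do, by per-coordinate interval intersection. Your coordinate-decoupling observation and the final verification step match the paper's stated alternative, so no substantive difference remains.
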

\begin{proof}
For each scenario of $(t_1,\ldots,t_m)$ (defined as in Theorem~\ref{T:rvpm_c_ci}), we solve the below linear programming:
\begin{subequations}
\begin{align}
  & ||\tilde{c}_1-{c}_1||_\infty \leq \epsilon\\
&    ||\tilde{c}_1-{v}_j||_\infty \leq d_j^{t_j},\quad j \in [m]
\end{align}
\end{subequations}
Since each LP has $O(d)$ linear constraints, and there are $O({|f_{\text{uniq}}|}^m)$ scenarios, the problem is polynomial time solvable.
Alternatively, we can check the interval endpoints similar to Lemma \ref{non_convex} for each dimension. Since $m$ is a constant, the algorithm returns $\tilde{c}_1$ in $O(d)$ time if a solution to the LP exists, or NO if not feasible. %
\end{proof}

The destructive case is somewhat more involved.
We begin with a lemma that again shows that a non-convex feasibility problem we need to solve is tractable.
\begin{lemma}
The feasibility problem in Lemma \ref{non_convex} can be solved in linear time if the number of constraints $k$ is constant.
\label{non_convex_poly}
\end{lemma}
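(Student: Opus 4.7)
The plan is to reduce the feasibility problem to a covering problem over subsets of $[k]$, then exploit the fact that $k$ being constant makes the subset state space $2^k$ constant, enabling a linear-pass dynamic program over the $d$ coordinates.

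First I would observe that the constraints $\|\tilde y - a_i\|_\infty \ge b_i$ decompose coordinatewise in a disjunctive way: constraint $i$ is satisfied iff there exists some $j \in [d]$ with $|\tilde y_j - a_{i,j}| \ge b_i$. Combined with Lemma~\ref{non_convex}, this means a feasible $\tilde y$ exists iff we can choose values $\tilde y_j \in [y_j-\epsilon, y_j+\epsilon]$ for each $j\in[d]$ so that $\bigcup_{j=1}^d S_j(\tilde y_j) = [k]$, where $S_j(\tilde y_j) = \{i\in[k] : |\tilde y_j - a_{i,j}|\ge b_i\}$. So the problem becomes: pick one ``coverage set'' per coordinate from an allowed family and cover $[k]$.

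Second, I would bound the number of distinct coverage sets achievable in each coordinate. For fixed $j$, the function $\tilde y_j \mapsto S_j(\tilde y_j)$ is piecewise constant on $[y_j-\epsilon, y_j+\epsilon]$, with breakpoints contained in $\{a_{i,j} - b_i, a_{i,j}+b_i : i\in[k]\}$. Hence at most $2k+1$ distinct coverage sets $\mathcal{F}_j \subseteq 2^{[k]}$ arise, and Lemma~\ref{non_convex} already tells us it suffices to sample $\tilde y_j$ at the $O(k)$ breakpoints (plus $y_j$) to enumerate them. Computing $\mathcal{F}_j$ together with one representative value of $\tilde y_j$ per set takes $O(k)$ time; across all coordinates this preprocessing costs $O(kd)=O(d)$.

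Third, I would run a dynamic program over coordinates with state being the current union of chosen coverage sets. Let $T_0 = \{\emptyset\}$, and define
\begin{equation*}
T_{j+1} \;=\; \bigl\{\,U \cup S : U \in T_j,\; S \in \mathcal{F}_{j+1}\,\bigr\} \subseteq 2^{[k]}.
\end{equation*}
The instance is feasible iff $[k] \in T_d$; storing one backpointer per state allows reconstruction of the chosen $\tilde y_j$'s. Each $|T_j|\le 2^k$ and each $|\mathcal{F}_j|\le 2k+1$, so each transition costs $O(2^k \cdot k) = O(1)$ since $k$ is constant, giving total running time $O(d)$.

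The only subtle step is the upper bound on $|\mathcal{F}_j|$ and the guarantee that restricting $\tilde y_j$ to the breakpoint set (as in Lemma~\ref{non_convex}) loses no achievable coverage set; this is the one place the proof must be careful, but it follows directly from the piecewise-constant structure of $S_j(\cdot)$ and the conclusion of Lemma~\ref{non_convex} that a representative solution can always be taken at these breakpoints. Everything else is a routine constant-state DP sweep of length $d$.
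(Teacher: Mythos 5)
Your proposal is correct and is essentially the paper's own argument: the paper's Algorithm~\ref{feas_poly} likewise uses the candidate coordinate values $P_j$ and coverage sets $\mathcal{S}_j$ guaranteed by Lemma~\ref{non_convex}, then sweeps over the $d$ dimensions while maintaining the achievable unions of coverage sets (your DP states $T_j\subseteq 2^{[k]}$, which the paper merely prunes to pairwise-incomparable sets), declaring feasibility exactly when $[k]$ is reached, for total time $O(d)$ when $k$ is constant.
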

\begin{proof}[Proof sketch]
According to Lemma \ref{non_convex}, for each dimension $j\in [d]$, we could compute set ${P}_j$ and its corresponding set ${\mathcal{S}_j = \{S_j(p_j)\mid p_j \in P_j\}}$. %
Finding a solution to the feasibility problem is equivalent to finding $S_j(p_j) \in \mathcal{S}_j$ for all $j\in[l]$, with some $l \leq d$ that satisfy $\cup_{j=1}^l S_j(p_j)= [k]$, and $[p_1,p_2,\ldots,p_l,y_{l+1},y_{l+2},\ldots,y_d]$ is a solution. Since $k$ is constant, the algorithm is linear and of complexity $O(d)$. We can also determine in linear time if none of the representative points satisfy the feasibility condition, and return NO in that case.
The detailed algorithm and proof
are provided in Supplement \ref{algo_1_des}.
\end{proof}
\begin{theorem}
When the number of voters is constant, the destructive control variant of \textsc{RVPM} can be solved in polynomial time under $l_\infty$ norm for arbitrary scoring rules.
\end{theorem}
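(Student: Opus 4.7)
The plan is to combine the scenario-enumeration framework (as used throughout the paper for destructive control with arbitrary scoring rules) with the linear-time non-convex feasibility solver from Lemma~\ref{non_convex_poly}. Since the number of voters $m$ is constant, and since it suffices to enumerate ranking targets $(t_1,\ldots,t_m)$ with each $t_j\in\{s_1,\ldots,s_r\}$ ranging over the representatives of the $|f_{\text{uniq}}|$ unique values of $f$, the total number of scenarios to consider is $|f_{\text{uniq}}|^m = O(n^m)$, which is polynomial in $n$ for constant $m$.

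For each scenario $(t_1,\ldots,t_m)$, I would formulate the feasibility problem
\begin{align*}
  \|\tilde{c}_1 - c_1\|_\infty &\leq \epsilon, \\
  \|\tilde{c}_1 - v_j\|_\infty &\geq d_j^{t_j}, \quad j \in [m],
\end{align*}
which is exactly an instance of the non-convex feasibility problem studied in Lemma~\ref{non_convex} with $k=m$ constraints. Since the number of constraints $k=m$ is a constant, Lemma~\ref{non_convex_poly} gives an $O(d)$-time algorithm that either returns a representative $\tilde{c}_1$ or certifies infeasibility. I would then verify in $O(nd)$ time that the returned $\tilde{c}_1$ actually causes $c_1$ to lose under the given scoring rule (this check is needed because the scenario only enforces that $c_1$ scores \emph{at most} $f(t_j)$ from each voter, and we still need to confirm that some rival outscores $c_1$ in aggregate).

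Summing over all scenarios yields an overall running time of $O(n^m \cdot (d + n))$ plus preprocessing $O(nd + n\log n)$ for distances and rankings, which is polynomial for constant $m$. If any scenario produces a valid $\tilde{c}_1$ that defeats $c_1$, we output it; otherwise we return NO. Correctness follows from the destructive-control analogue of Lemma~\ref{constructiverep}: by enumerating one representative ranking per unique score value for each voter, we cover every possible score profile $c_1$ could achieve, so if destructive control is feasible at all, some scenario exhibits a witness.

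The main obstacle is not really computational but conceptual: we must notice that the $l_\infty$ destructive feasibility problem is precisely the non-convex problem handled by Lemma~\ref{non_convex_poly}, and that its linear-time solvability under a constant number of constraints is exactly what makes the per-scenario work cheap enough to absorb the $O(n^m)$ scenario blow-up. Once this connection is identified, the rest is bookkeeping over scenarios and a final election-outcome check.
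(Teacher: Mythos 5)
Your proposal is correct and follows essentially the same route as the paper's proof: enumerate the $O(|f_{\text{uniq}}|^m)$ scenarios $(t_1,\ldots,t_m)$, solve the non-convex feasibility system $\|\tilde{c}_1-c_1\|_\infty\leq\epsilon$, $\|\tilde{c}_1-v_j\|_\infty\geq d_j^{t_j}$ via Lemma~\ref{non_convex_poly} in $O(d)$ time per scenario (since $k=m$ is constant), and then check the election outcome, with correctness resting on the destructive analogues of Lemmas~\ref{constructiverank} and~\ref{constructiverep}. The only differences are cosmetic bookkeeping in the stated running time, which does not affect polynomiality.
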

\begin{proof}
Using Lemma \ref{non_convex_poly}, we can solve the linear feasibility problem in Equation~\eqref{E:lfp} where $m$ is constant for each scenario of $(t_1,\ldots,t_m)$ (defined as in Theorem~\ref{T:rvpm_d_ci}).
Each feasibility problem can be solved in $O(d)$ time and there are  $O({|f_{\text{uniq}}|}^m)$ scenarios. The full time complexity of the algorithm is ${O(n(d + \log(n)) + |f_{\text{uniq}}|^m (d+n))}$, or $O(n d)$ if $|f_{\text{uniq}}|$ is constant.
\end{proof}

We can extend the results to an arbitrary number of voters with a constant number of distinct opinions by the same argument as for \textsc{BVPM}.

\section{Conclusion}

We model the impact of political misinformation on elections as election control in the spatial model of voting in which an adversary manipulates perceptions of the positions of a target candidate by the voters.
Our central observation, which obtains both when issues are real-valued and binary, and for different ways we can measure distance in generating preferences over candidates based on their relative positions to voters, is that what matters is the extent of opinion diversity in the voting population.
Specifically, when voter positions on issues are highly diverse, the manipulation problem is intractable in most settings.
In contrast, when voter views can be reduced by a small number of opinion groups, the control problem becomes linear in dimension when issues are binary, and polynomial with real-valued issues.
Our characterization of the complexity landscape leaves several open questions, such as hardness of constructive control with two candidates in the setting with real-valued issues (we only show that it is tractable for $l_\infty$ norm).
Furthermore, our negative results for real-valued issues do not address the case of $l_1,$ while our positive results only apply to $l_2$ and $l_\infty$ in this setting.

Our model has several important limitations that suggest further useful future directions.
First, we assume that the same norm is used both by voters to rank candidates, and to limit the extent of perception manipulation; however, these distances may often be useful to measure in different ways.
Second, we assume that perception manipulation has identical impact on all voters.
A more sophisticated model would blend this with the election control approaches in which misinformation spreads through a social network, with only a subset of voters impacted, potentially in different ways.

\paragraph{Acknowledgements}
This work was partially supported by the National Science Foundation (grants IIS-1905558, IIS-1903207, and IIS-1939677) and Amazon.
\bibliographystyle{named}
\bibliography{ecpm}

\newpage
\appendix
\section*{Supplementary Materials}

\section{Missing Proofs}

\subsection{Proof of Lemma~\ref{constructiverank}}

For each voter $v_j, j \in [m]$, we rank candidates $c_i ~({i\geq 2})$ by their distances to $v_j$ from closest to furthest. %
Without loss of generality, we assume candidates $c_2,c_3,\ldots,c_n$ are ranked from closest to furthest w.r.t. $v_j$. Then when $c_1$ is inserted into this sequence and is ranked $r_j$, the score each candidate receives from $v_j$ is as below:

\begin{table}[H]
\centering
\begin{tabular}{|c|c|c|c|c|c|c|}
\hline
$c_{2}$ & ... & $c_{r_j}$   & $c_1$ & $c_{r_j+1}$     & ... & $c_{n}$ \\
\hline
$f(1)$ & ... & $f(r_j-1)$ & $f(r_j)$ & $f(r_j+1)$ &  ...  & $f(n)$ \\
\hline
\end{tabular}
\end{table}

If we move $c_1$ from ranking position $r_j$ to $r'_j$, ${r'_j \leq r_j}$, since $f$ is a non-increasing function, the score $c_1$ receives from $v_j$ will increase from $f(r_j)$ to $f(r'_j)$; the scores candidates $c_{r_j+1},c_{r_j+2},\ldots,c_{n}$ and $c_{2},c_{3},\ldots,c_{r'_j}$ receive from $v_j$ will not change; for candidates $c_{r'_j+1}, c_{r'_j+2},\ldots,c_{r_j}$, since their rankings will decrease by $1$, the scores they receive from $v_j$ will not increase. This means if we move $c_1$ from ranking position $(r_1,r_2,\ldots,r_m)$ to $(r'_1,r'_2,\ldots,r'_m)$, ${r'_j \leq r_j}, \forall j \in [m]$, the total score $c_1$ receives will not decrease, while the total scores other candidates receive will not increase. Since $c_1$ wins the election with ranking position $(r_1,r_2,\ldots,r_m)$, $c_1$ will also win the election with ranking position $(r'_1,r'_2,\ldots,r'_m)$.

\subsection{Proof of Lemma~\ref{constructiverep}}

Without loss of generality, we assume candidates $c_2,c_3,\ldots,c_n$ are ranked from closest to furthest w.r.t. voter $v_j$. We next insert $c_1$ into this sequence and assume $c_1$ receives a score of $f(s_{i+1})$ from $v_j$, where $i\in\{0,1,\ldots,r-1\}$. After the insertion, the score each candidate receives from $v_j$ is as below:
\begin{table}[H]
\centering
\begin{tabular}{c||c||c||c}
\hline
$...,c_{s_i+1}$ & $... ,c_1, ... $ & $ c_{s_{i+1}+1}, ..., c_{s_{i+2}}$   & $c_{s_{i+2}+1},...$      \\
\hline
$f(s_i)$ & $f(s_{i+1})$ & $f(s_{i+2})$ & $f(s_{i+3})$  \\
\hline
\end{tabular}
\end{table}

Notice that whichever ranking position $c_1$ takes among ${s_i +1},s_i +2,\ldots,s_{i+1}$, the final score each candidate receives from $v_j$ are the same, meaning these ranking positions are equivalent. 
Since the solution set of getting a ranking position higher than $s_{i+1}$ covers the solution set of getting a ranking position higher than $s_i +1,s_i +2,\ldots,s_{i+1}-1$, we only need to consider $s_{i+1}$.
\subsection{Proof of Theorem \ref{rvpm_des_lp} ($l_p$ norm, $1<p<\infty$)}
\label{S:rvpm_constructive_hardness_lp}
Checking whether target candidate $\tilde{c}_1$ loses the election is clearly in P. We next prove hardness by reduction from \textsc{3-SAT}. The proof is in $\mathbb{R}^{d+1}$ space.

\begin{definition}[\textsc{3-SAT}]
Given a set of variables $X$, a set of clauses $C$ over $X$ where $\forall c \in C$ has ${|c|=3}$, is there an assignment to the variables $X$ that satisfies all the clauses?
\end{definition}

Let target candidate $c_1=[0,0,\ldots,0,0]$, rival candidate $c_2=[0,0,\ldots,0,a]$ with ${a=((\frac{1}{d}+1)^p} + (\frac{1}{d})^p(d-1)-1)^\frac{1}{p}$, and $\epsilon = d^{\frac{1}{p}-1}$. Given a \textsc{3-SAT} instance with $d-1$ variables $\{X_1,\ldots,X_{d-1}\}$ and $r$ clauses $\{E_1, E_2,\ldots,E_r\}$, we create a voter profile $v_D = [0,0,\ldots,0,-a]$ which is adopted by $r$ voters. Note that since the distance between $c_2$ and $v_D$ is $2a$, while the maximum distance between $\tilde{c}_1$ and $v_D$ is ${a + d^{\frac{1}{p}-1} < 2a}$ ($\max~||\tilde{c}_1-v_D||_p$, s.t., $||\tilde{c}_1||_p\leq \epsilon$), wherever $\tilde{c}_1$ moves, $v_D$ will always vote for $c_1$. Let $e_i\in \mathbb{R}^{d+1}$ be the unit vector that has $1$ in the $i$-th position. We create $2d$ voters $\{v_{e_i}, v_{-e_i}\}_{i=1}^d$ where $v_{e_i} = e_i, v_{-e_i}=-{e_i}, i \in [d]$. Next we create $r$ voters $\{{v_i}\}_{i=1}^{r}$ where each $v_i$ corresponds to clause $E_i, i \in [r]$. Given fixed parameters $\alpha>0, l>0$ (which we will specify later), for each $v_{i}$, $v_{i,j} = -\alpha$ if $X_j$ is included in clause $E_i$, $v_{i,j} = \alpha$ if $\overline{X}_j$ is included in clause $E_i, j\in[d-1]$; $v_{i,d}=l \alpha$; $0$ otherwise. 

We now specify parameters $\alpha$ and $l$. We chose $\alpha>0$ and $l>0$ so that they satisfy the below inequality:
\begin{equation}
    \begin{split}
    &(\frac{1}{d}+\alpha)^p + 2|\frac{1}{d}-\alpha|^p +(d-4)(\frac{1}{d})^p + (\frac{1}{d}+l\alpha)^p\\
    &> a^p +3\alpha^p + (l\alpha)^p\\
    &> 3|\frac{1}{d}-\alpha|^p +(d-4)(\frac{1}{d})^p + (\frac{1}{d}+l\alpha)^p.
    \end{split}
    \label{des_2_lp_param}
\end{equation}
We claim such parameters must exist. We first move term $3\alpha^p + (l\alpha)^p$ to the right and left side of the inequality. Since $d$ and $p$ are given, $a^p$ is a fixed value in $\mathbb{R}$, and the right and left side of the inequality are continuous w.r.t. $\alpha$ and $l$, with gap equals to $(\frac{1}{d}+\alpha)^p - |\frac{1}{d}-\alpha|^p >0$. This means if we could find (a) $l>0$, $\alpha_1>0$ that satisfy ${(\frac{1}{d}+\alpha_1)^p} + {2|\frac{1}{d}-\alpha_1|^p} +{(d-4)(\frac{1}{d})^p} + {(\frac{1}{d}+l\alpha_1)^p }-3\alpha_1^p -(l\alpha_1)^p > a^p$ and (b) $\alpha_2>0$ that satisfy ${(\frac{1}{d}+\alpha_2)^p} + {2|\frac{1}{d}-\alpha_2|^p} +{(d-4)(\frac{1}{d})^p} + {(\frac{1}{d}+l\alpha_2)^p }-3\alpha_2^p -(l\alpha_2)^p < a^p$, by intermediate value theorem there must exist $\alpha>0$ that satisfies inequality (\ref{des_2_lp_param}). Notice that regardless of the value of $l$, we could always find $\alpha_1>0$ large enough so that the condition (a) is satisfied; as for condition (b), by %
having $\alpha_2 = \frac{1}{d}$, we could find $l>0$ that satisfies the condition.

Now we have $2r+2d$ voters in total, and they all vote for $c_1$. RVPM moves $c_1$ to $\tilde{c}_1$ so that $\tilde{c}_1$ loses at least $r+d$ voters and loses the election. Recall that $v_D$ will always vote for $\tilde{c}_1$. We next find a solution $\tilde{c}_1$ that will lose $d$ votes in $\{v_{e_i}, v_{-e_i}\}_{i=1}^d$ and all $r$ voters in $\{{v_i}\}_{i=1}^{r}$. 

Notice that the distance between $c_2$ and any voter in $\{v_{e_i}, v_{-e_i}\}_{i=1}^d$ is $(a^p+1)^\frac{1}{p}$. We claim that for $\tilde{c}_1$ to lose $d$ voters in $\{v_{e_i}, v_{-e_i}\}_{i=1}^d$ (which is maximum), $\tilde{c}_1$ has to have format $[\pm \frac{1}{d},\pm \frac{1}{d},\ldots, \pm \frac{1}{d},0]$. Given the budget constraint ${||\tilde{c}_1||_p\leq \epsilon}$, the solution of $\tilde{c}_1$ losing maximum number of voters can always be achieved at the constraint boundary, since we can always move $\tilde{c}_{1,d+1}$ away from $0$ until ${||\tilde{c}_1||_p = \epsilon}$ and this will only increase the distance between $\tilde{c}_1$ and any voter in $\{v_{e_i}, v_{-e_i}\}_{i=1}^d\cup\{{v_i}\}_{i=1}^{r}$. 
With ${||\tilde{c}_1||_p = \epsilon}$, we have
    ${||\tilde{c}_1-v_{e_i}||_p^p}={|\tilde{c}_{1,i}-1|^p+d^{1-p}-|\tilde{c}_{1,i}|^p}$,
   ${||\tilde{c}_1-v_{-e_i}||_p^p}={|\tilde{c}_{1,i}+1|^p+d^{1-p}-|\tilde{c}_{1,i}|^p}$.
Recall that in destructive control $\tilde{c}_1$ will lose the voter in case of tie-breaking. Since for $\tilde{c}_1$ to lose $v_{e_i}$, due to the budget constraint, we must have $\tilde{c}_{1,i}\leq -\frac{1}{d}$; and for $\tilde{c}_1$ to lose $v_{-e_i}$ we must have $\tilde{c}_{1,i}\geq \frac{1}{d}$; $\tilde{c}_1$ can only lose maximum one voter between $v_{e_i}$ and $v_{-e_i}, {i \in [d]}$. Given the budget constraint, in order for $\tilde{c}_1$ to lose $d$ voters in $\{v_{e_i}, v_{-e_i}\}_{i=1}^d$, $\tilde{c}_1$ has to have the above format.

Next we show that finding a $\tilde{c}_1$ to lose all $r$ voters in $\{{v_i}\}_{i=1}^{r}$ is equivalent to finding a solution to \textsc{3-SAT}. 

Given a \textsc{3-SAT} solution $\{X_1,\ldots,X_{d-1}\}$, we let $\tilde{c}_{1,i} = \frac{1}{d}$ if $X_i$ is TRUE, $\tilde{c}_{1,i} = -\frac{1}{d}$ if $X_i$ is FALSE, ${i\in[d-1]}$; $\tilde{c}_{1,d}=-\frac{1}{d}$ and $\tilde{c}_{1,d+1}=0$. It is easy to check $\tilde{c}_1$ satisfies the budget constraint. For any clause $E_i$, $i \in [r]$ that has $t$ literals to be TRUE $(1\leq t\leq 3)$, from inequality (\ref{des_2_lp_param}) we have ${||\tilde{c}_1-v_i||_p } \geq  ||c_2-v_i||_p$, meaning $\tilde{c}_1$ loses voter $v_{i}$ and is a solution to RVPM.

Given a RVPM solution $\tilde{c}_1$, since $\tilde{c}_1$ has format $[\pm \frac{1}{d},\pm \frac{1}{d},\ldots, \pm \frac{1}{d},\pm \frac{1}{d},0]$, we can construct a \textsc{3-SAT} solution $\{X_1,\ldots,X_{d-1}\}$ where variables $X_i =\text{TRUE}$ if $\tilde{c}_{1,i} = \frac{1}{d}$, or $X_i =\text{FALSE}$ if $\tilde{c}_{1,i} = -\frac{1}{d}$, ${i\in[d-1]}$. Since $v_{i,d}=l \alpha$, we assume $\tilde{c}_{1,d}=-\frac{1}{d}$. For any clause $E_i, {i \in[r]}$ that has $t$ literals to be TRUE $(0\leq t\leq 3)$, since we need $\tilde{c}_1$ to lose voter $v_{i}$, meaning ${||\tilde{c}_1-v_i||_p \geq ||c_2-v_i||_p}$, from inequality (\ref{des_2_lp_param}) we must have $t\geq1$.
\subsection{Proof of Theorem \ref{rvpm_des_lp} ($l_\infty$ norm)}
\label{S:rvpm_constructive_hardness_linf}
For RVPM $l_\infty$ norm,
again, it is easy to see that it's in NP. We now show hardness again by reduction from \textsc{3-SAT}. The proof is in $\mathbb{R}^{d+1}$ space.

We set target candidate $c_1 = [0,0,\ldots,0,0]$,  rival candidate $c_2 = [0,0,\ldots,0,2]$ and let $\epsilon=1$. We are given a \textsc{3-SAT} instance with $d-1$ variables $\{X_1,X_2,\ldots,X_{d-1}\}$ and $r$ clauses $\{E_1, E_2,\ldots,E_r\}$. We create $r$ voters $\{{v_i}\}_{i=1}^{r}$ where each $v_i$ corresponds to clause $E_i, i \in[r]$. For each $v_{i}$, we set $v_{i,j} = -1$ if $X_j$ is included in clause $E_i$, or $v_{i,j} = 1$ if $\overline{X}_j$ is included in clause $E_i$, $j \in[d-1]$; $0$ otherwise. We also create $r$ dummy voters at $[0,0,\ldots,0,0]$; those dummy voters will always vote for $\tilde{c}_1$ since their distances to $c_2$ are always 2 while their distances to $\tilde{c}_1$ are within $1$ wherever $\tilde{c}_1$ moves. We now have $2r$ voters, and they all vote for $c_1$. \textsc{RVPM} will change $c_1$ to $\tilde{c}_1$ so that $\tilde{c}_1$ loses at least $r$ voters.

Given a \textsc{3-SAT} solution $\{X_1,\ldots,X_{d-1}\}$, we let $\tilde{c}_{1,i} = 1$ if $X_i $ is TRUE, and $\tilde{c}_{1,i} = -1$ if $X_i$ is  FALSE, $i\in[d-1]$; $\tilde{c}_{1,d}=0$. It is easy to check $||\tilde{c}_1-v_i||_\infty=||{c}_2-v_i||_\infty = 2, \forall i\in[r]$. $\tilde{c}_1$ loses $r$ voters and loses the election.

Given a RVPM solution $\tilde{c}_1$, since $\tilde{c}_1$ loses the election, we must have $||\tilde{c}_1-v_i||_\infty\geq ||{c}_2-v_i||_\infty = 2, \forall i\in[r]$. Since $v_i\in \{0,\pm 1\}^d, \forall i \in [r]$ and $||\tilde{c}_1||_\infty\leq 1$, $\tilde{c}_1$ can only lose $r$ voters by having
$||\tilde{c}_1-v_i||_\infty= 2, \forall i\in[r]$. Since $v_{i,d}=0$, we must have $\tilde{c}_{1,i}\in \{-1, 1\}, i \in [d-1]$. By setting $X_i =\text{TRUE}$ if $\tilde{c}_{1,i}=1$, $X_i =\text{FALSE}$ if $\tilde{c}_{1,i}=-1, i \in [d-1]$, we have a solution $\{X_1,\ldots,X_{d-1}\}$ to \textsc{3-SAT}.
\subsection{Proof of Theorem \ref{T:cc_real_manyc_lp} ($l_p$ norm, $1<p<\infty$)}
\label{S:supp_cc_manyc_lp}

Checking whether a given target candidate $\tilde{c}_1$ wins the election is clearly in P.
We now show hardness by reduction from \textsc{3-SAT}. The proof is in $\mathbb{R}^d$ space.%

Let $e_i\in \mathbb{R}^{d}$ be the unit vector that has $1$ in the $i$-th position, $d'=d-1$. We first introduce the below lemma.
\begin{lemma}
The smallest ball that encloses $\{e_i\}_{i=1}^{d'}$ is centered at $c\sum_{i=1}^{d'}e_i$, where $c = \frac{1}{1+(d'-1)^{1/(p-1)}}$, and the radius of the ball is $r_{d'} = ((d'-1)c^p+(1-c)^p)^{1/p}$.\label{lemma:cl}
\end{lemma}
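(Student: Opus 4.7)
My plan is to reduce the problem to a one-dimensional optimization by exploiting the symmetry of $S = \{e_i\}_{i=1}^{d'}$. Fix an integer $p$ with $1 < p < \infty$, so that the $l_p$ norm is strictly convex. The first step is to argue that the smallest enclosing ball of $S$ is unique: if $x \neq y$ both achieve the minimal radius $r$, pick $z \in S$ with $\|x - z\|_p = r$ (such a $z$ exists because $F(w) := \max_{z \in S} \|w - z\|_p$ attains its max over the finite set $S$); since $\|y - z\|_p \leq r$, strict convexity of $\|\cdot\|_p$ for $1 < p < \infty$ gives $\|(x+y)/2 - z\|_p < r$, contradicting the minimality of $r$.

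With uniqueness in hand, I would use symmetry to pin down the center $x^*$. Since $S$ is invariant under any permutation of the first $d'$ coordinates, uniqueness forces $x^*_1 = \cdots = x^*_{d'} =: c$ for some scalar. For indices $i > d'$, replacing $x^*_i$ by $0$ weakly decreases $\|x^* - e_j\|_p$ for every $j \in [d']$ (since $e_j$ has a zero in that slot), so uniqueness again forces $x^*_i = 0$. Thus $x^* = c \sum_{i=1}^{d'} e_i$ for some $c \in \mathbb{R}$, and by symmetry the common distance from $x^*$ to any $e_j$ is $g(c) = ((1-c)^p + (d'-1)|c|^p)^{1/p}$.

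It remains to minimize $g$. A quick monotonicity check rules out $c < 0$ and $c > 1$, and on $[0,1]$ the function $g(c)^p = (1-c)^p + (d'-1)c^p$ is strictly convex. Setting its derivative to zero yields $(d'-1)c^{p-1} = (1-c)^{p-1}$, i.e., $(d'-1)^{1/(p-1)} c = 1 - c$, so $c = 1/(1 + (d'-1)^{1/(p-1)})$ as claimed; plugging back gives the stated $r_{d'} = ((d'-1)c^p + (1-c)^p)^{1/p}$. The main obstacle is the uniqueness/symmetry reduction in the first two paragraphs; once that is settled, the remaining optimization is elementary one-variable calculus.
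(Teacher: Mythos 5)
Your proposal is correct and takes essentially the same route as the paper: symmetry pins the optimal center to the diagonal $c\sum_{i=1}^{d'}e_i$ and the problem reduces to the one-variable minimization of $(d'-1)c^p+(1-c)^p$, whose stationarity condition $(d'-1)c^{p-1}=(1-c)^{p-1}$ gives $c=\frac{1}{1+(d'-1)^{1/(p-1)}}$; your uniqueness-via-strict-convexity argument simply makes rigorous the symmetry step that the paper asserts outright. One small repair in that uniqueness step: showing $\|(x+y)/2-z\|_p<r$ for a single farthest point $z$ of $x$ does not by itself contradict minimality of $r$ --- you need the same triangle-inequality/strict-convexity bound applied to \emph{every} $z\in S$ (strict whenever $\|x-z\|_p=\|y-z\|_p=r$ with $x\neq y$, and immediate from the triangle inequality otherwise), so that the midpoint's covering radius over the finite set $S$ is strictly below $r$.
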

\begin{proof}
By symmetry, the center of the smallest ball has the form $c\sum_{i=1}^{d'}e_i~(0\leq c \leq 1)$, and its distance to $e_i, i \in[d']$ is 
\[||e_i- c\sum_{i=1}^{d'} e_i||_p = ((d'-1)c^p+(1-c)^p)^{1/p}.\]

Since the second order derivative of the distance w.r.t. $c$ is always greater than $0$, by having the first order derivative equals zero, we have $c = \frac{1}{1+(d'-1)^{1/(p-1)}}$.
\end{proof}

Notice that we could always change any $e_i$ to $-e_i, i \in [d']$ in Lemma \ref{lemma:cl} the conclusion still holds.

Next we construct the voter set and candidate set for RVPM problem.
Let target candidate $c_T = [0,0,\ldots,0,0]$ and $\epsilon=c\cdot (d')^\frac{1}{p}$ ($c$ is defined in Lemma \ref{lemma:cl}).
Given a \textsc{3-SAT} instance with $d-2$ variables $\{X_1,\ldots,X_{d-2}\}$ and $r$ clauses $\{E_1, E_2,\ldots,E_r\}$, we create $r$ voters $\{v_i\}_{i=1}^r$ and $r$ candidates $\{c_i\}_{i=1}^r$ where each $v_i$ and $c_i$ corresponds to clause $E_i, i \in [r]$. Given fixed parameters $l>0, \alpha>0$ (which we will specify later), if $X_j$ is included in clause $E_i$, we set $v_{i,j}=c_{i,j}=\alpha$, and if $\overline{X}_j$ is included in clause $E_i$, we set $v_{i,j}=c_{i,j}=-\alpha, j \in [d-2]$; $v_{i,d'}=c_{i,d'}=l \alpha$; $c_{i,d}=r_{d'}$ ($r_{d'}$ is defined in Lemma \ref{lemma:cl}); $0$ otherwise. We next create $2d'$ voters $\{v_{e_i},v_{-e_i}\}_{i=1}^{d'}$ and $2d'$ candidates $\{c_{e_i},c_{-e_i}\}_{i=1}^{d'}$ where $\{v_{e_i}, c_{e_i}\}$ corresponds to $e_i$ and $\{v_{-e_i}, c_{-e_i}\}$ corresponds to $-e_i$. For any $i\in[d']$, we set $v_{{e_i},i}= c_{{e_i},i}=1$ and $v_{{-e_i},i}= c_{{-e_i},i}=-1$; $c_{{e_i},d}=c_{{-e_i},d}=r_{d'}$; $0$ otherwise. We also create a rival candidate $c_R = [M,0,0,\ldots,0,0]$ and $d'+r$ dummy voters also at $[M,0,0,\ldots,0,0]$, where $M$ is large enough so that those dummy voters will always vote for $c_R$. Notice that as of current, any candidates created from \textsc{3-SAT} clauses and from $\{\pm e_i\}_{i=1}^{d'}$ receives one vote, that is, $v_j$ votes for $c_j, j\in[r]$; $v_{e_i}$ votes for $c_{e_i}$ and $v_{-e_i}$ votes for $c_{-e_i}, i \in [d']$; the distance between each pair of voter and candidate is $r_{d'}$. RVPM moves $c_1$ to $\tilde{c}_1$ so that $\tilde{c}_1$ wins at least $r+d'$ voters and wins the election.

Recall that when we create the voters and candidates w.r.t. \textsc{3-SAT} clauses, we have two fixed parameters $l>0, \alpha>0$. We chose those two parameters to satisfy the below inequality:
\begin{equation}
\begin{split}
        |\alpha-c|^p+2(\alpha+c)^p+(d'-4)c^{p} +|l\alpha-c|^p < &\\r_{d'}^p< 3(\alpha+c)^p+(d'-4)c^{p} +|l\alpha-c|^p.&\label{rvpm_2_ball}
\end{split}
\end{equation}
We claim that parameters $l>0, \alpha>0$ that satisfy inequality (\ref{rvpm_2_ball}) must exist. Since $d'$ and $p$ are given, $r_{d'}$ is a fixed value in $\mathbb{R}$, and the right and left side of the inequality are continuous w.r.t. $\alpha$ and $l$, with gap equals to $(\alpha+c)^p - |\alpha-c|^p >0$. This means if we could find (a) $l>0$, $\alpha_1>0$ that satisfy ${|\alpha_1-c|^p}+{2(\alpha_1+c)^p}+{(d'-4)c^{p}} +{|l\alpha_1-c|^p} > r_{d'}^p$ and (b) $\alpha_2>0$ that satisfy ${|\alpha_2-c|^p}+{2(\alpha_2+c)^p+(d'-4)c^{p}} +{|l\alpha_2-c|^p} < r_{d'}^p$, by intermediate value theorem there must exist $\alpha>0$ that satisfies inequality (\ref{rvpm_2_ball}). Notice that regardless of the value of $l$, we could always find $\alpha_1>0$ large enough so that condition (a) is satisfied. Next we find $l>0$ and $\alpha_2>0$ that satisfy condition (b). Plug in $r_{d'}^p = (d'-1)c^p+(1-c)^p$, we have $|l\alpha_2-c|^p <3c^{p} + (1-c)^p-|\alpha_2-c|^p - 2|\alpha_2+c|^p$. Since $0<c<\frac{1}{2}$, there exists $\alpha_2$ such that $3c^{p} + (1-c)^p-|\alpha_2-c|^p - 2|\alpha_2+c|^p>0$, we can then find $l>0$ such that the inequality holds (e.g., $l=c/\alpha_2)$.

With parameters $l>0$ and $\alpha>0$ chosen and fixed, next we show that finding a $\tilde{c}_1$ to win the election (i.e., win at least $r+d'$ voters) is equivalent to finding a solution to \textsc{3-SAT}. 

Given a \textsc{3-SAT} solution $\{X_1,\ldots,X_{d-2}\}$, we let ${\tilde{c}_{T,i} = c}$ if $X_i$ is TRUE, $\tilde{c}_{T,i} = -c$ if $X_i$ is FALSE, ${i\in[d-2]}$; $\tilde{c}_{T,d'} = c$; $\tilde{c}_{T,d} = 0$. It is easy to check $\tilde{c}_{T}$ satisfies the budget constraint and win $d'$ voters in $\{v_{e_i},v_{-e_i}\}_{i=1}^{d'}$ by having a distance of $r_{d'}$. For any clause $E_i, i \in [r]$ that has $t$ literals to be TRUE $(1\leq t \leq 3)$, by inequality (\ref{rvpm_2_ball}) we have 
\begin{subequations}
\begin{align*}
&||v_i - \tilde{c}_T||_p\\
=&t|\alpha-c|^p+(3-t)(\alpha+c)^p+(d'-4)c^{p} +|l\alpha-c|^p\\
<&r_{d'}.
\end{align*}
\end{subequations}
$\tilde{c}_T$ wins $r+d'$ voters and wins the election.

Given a RVPM solution $\tilde{c}_{T}$, since $\tilde{c}_{T}$ needs to win at least $r+d'$ voters, this means $\tilde{c}_{T}$ has to have a distance of no more than $r_{d'}$ to at least $d$ voters in $\{v_{e_i},v_{-e_i}\}_{i=1}^{d'}$. By Lemma \ref{lemma:cl} and budget constraint requirement, $\tilde{c}_{T}$ has to have format ${\tilde{c}_{T,i}\in \{-c,c\}}, {i \in [d-1]}$, $\tilde{c}_{T,d}=0$, and $\tilde{c}_{T}$ wins only $d'$ voters in $\{v_{e_i},v_{-e_i}\}_{i=1}^{d'}$. Since $v_{i,d'}=l\alpha, i \in[r]$, we assume $\tilde{c}_{T,d'}=c$. We can then construct a \textsc{3-SAT} solution $\{X_1,\ldots,X_{d-2}\}$ where variables $X_i =\text{TRUE}$ if $\tilde{c}_{1,i} = c$, $X_i =\text{FALSE}$ if $\tilde{c}_{1,i} = -c$, ${i\in[d-2]}$. For any clause $E_i, {i \in[r]}$ that has $t$ literals to be TRUE $(0\leq t\leq 3)$, the distance between $\tilde{c}_T$ and $v_i$ is
\begin{subequations}
\begin{align*}
&||v_i - \tilde{c}_T||_p\\
=&t|\alpha-c|^p+(3-t)|\alpha+c|^p+(d'-4)c^{p} +|l\alpha-c|^p.
\end{align*}
\end{subequations}
Since $\tilde{c}_{T}$ needs to win all $r$ voters in $\{v_i\}_{i=1}^r$, we must have $||v_i - \tilde{c}_T||_p\leq r_{d'}$, by inequality (\ref{rvpm_2_ball}), this implies $t\geq1$. 

This proof is based on the covering by two balls problem in \citeauthor{Megiddo90}~\shortcite{Megiddo90}. While the original problem is under $l_2$ norm, we generalize it to $l_p$ with $p \in (1,\infty)$.

\subsection{Proof of Theorem~\ref{T:cc_real_manyc_lp} ($l_\infty$ norm)}
\label{S:supp_cc_manyc}
First, it is easy to see that the problem is in NP.
We now show hardness by reduction from \textsc{3-SAT}. The proof is in $\mathbb{R}^d$ space. 

We set target candidate $c_1 = [0,0,\ldots,0,0]$ and $\epsilon=1/2$. We are given a \textsc{3-SAT} instance with $d-1$ variables $\{X_1,X_2,\ldots,X_{d-1}\}$ and $r$ clauses $\{E_1, E_2,\ldots,E_r\}$. For each clause, we create $7$ sets of voters and candidates that each corresponds to a solution to the clause to be TRUE. For example, for clause $(X_i\lor \overline{X}_j \lor X_k)$ to be TRUE, we can have 
\begin{align*}
    \begin{split}
        &X_i = 1, X_j = 1, X_k = 1\\
        &X_i = 1, X_j = 1, X_k = -1\\
        &X_i = 1, X_j = -1, X_k = 1\\
        &X_i = 1, X_j = -1, X_k = -1\\
        &X_i = -1, X_j = 1, X_k = 1\\
        &X_i = -1, X_j = -1, X_k = 1\\
        &X_i = -1, X_j = -1, X_k = -1\\
    \end{split}
\end{align*}
where $1$ represents TRUE and $-1$ represents FALSE. For each such solution, we create a voter and a candidate where $v_{\cdot,i} = c_{\cdot,i} =1$ if $X_i = 1$, or $v_{\cdot,i} = c_{\cdot,i} =-1$ if $X_i = -1, i \in [d-1]$; $c_{\cdot,d} = 1/2$; $0$ otherwise. The distance between each voter and candidate created is $1/2$. In total we create $7r$ voters and candidates in this fashion. Next we create a rival candidate $c_2 = [5,5,\ldots,5,5]$ and $r$ dummy voters at $[5,5,\ldots,5,5]$. Those dummy voters will always vote for $c_2$ who then receives $r$ votes. This means in order for $c_1$ to win the election, $c_1$ needs to win at least $r$ voters created from the \textsc{3-SAT} clauses. Notice that currently no voter votes for $c_1$. \textsc{RVPM} will change $c_1$ to $\tilde{c}_1$ so that $\tilde{c}_1$ wins at least $r$ voters.

We next show finding a solution $\tilde{c}_1$ to RVPM is equivalent to finding a solution to \textsc{3-SAT}.

Given a \textsc{3-SAT} solution $\{X_1,\ldots,X_{d-1}\}$, we let $\tilde{c}_{1,i} = 1/2$ if $X_i =\text{TRUE}$, or $\tilde{c}_{1,i} = -1/2$ if $X_i =\text{FALSE}$, ${i\in[d-1]}$; $\tilde{c}_{1,d}=0$. $\tilde{c}_1$ wins $r$ voters and wins the election.

Given a RVPM solution $\tilde{c}_1$, since $\tilde{c}_1$ wins the election, $\tilde{c}_1$ needs to win at least $r$ voters created from the \textsc{3-SAT} clauses, by having a distance of no more than $1/2$ to those $r$ voters. Notice that $\tilde{c}_1$ can only win over maximum $1$ voter from each clause. Due to the budget constraint, $\tilde{c}_1$ has to have format $\tilde{c}_{1,i}\in \{-1/2,1/2\}, i \in [d-1]$. We let $X_i =\text{TRUE}$ if $\tilde{c}_{1,i}=1/2$, $X_i =\text{FALSE}$ if $\tilde{c}_{1,i}=-1/2, i \in [d-1]$, and $\{X_1,\ldots,X_{d-1}\}$ is the solution to \textsc{3-SAT}.

\subsection{Proof of Theorem~\ref{T:rvpm_cc_2}}
\label{S:supp_rvpm_cc_2}
For two candidates with arbitrary scoring rules under constructive control, target candidate $c_1$ wins the election iff $c_1$ wins at least half of the voters. We claim that there exists a point that can be computed in polynomial time which lets $c_1$ win maximum number of voters possible within $\epsilon$. The key idea is to move $c_{1,j}$ as close to $c_{2,j}$ as possible for each issue $j\in[d]$, leaving minimum gaps. Given the voter set $\mathcal{V}$, $c_1$, $c_2$ and $\epsilon$, the algorithm is as below:
\begin{center}
    \line(1,0){239}\\
    $\tilde{c}_{1,j} = \sign(c_{2,j}-c_{1,j})\cdot \min(|c_{2,j}-c_{1,j}|,\epsilon) + c_{1,j}, j \in [d]$.\\
    If $\tilde{c}_1$ wins the election, return $\tilde{c}_1$; else return NO.\\
    \line(1,0){239}
\end{center}

Next we give a formal proof of the algorithm. Since under $l_\infty$ norm only the relative distance between $c_{1,j}$ and $c_{2,j}$, $j\in[d]$ matters. Without loss of generality,  we assume $c_1 = [0,0,\ldots,0]$, $c_2 = [a_1,a_2,\ldots,a_d]$. Since the issues are independent from one another, we permute the issues so that $0 < |a_1| \leq |a_2|\leq \ldots \leq |a_d|$ (for issues that $c_1$ and $c_2$ agrees on, i.e., $a_j =0$, we can simply omit them). Given budget $\epsilon$, we assume $|a_i| \leq \epsilon < |a_{i+1}|$. We claim that $\tilde{c}_1 = [a_1,\ldots,a_i,\sign(a_{i+1})\epsilon,\ldots,\sign(a_d)\epsilon]$ wins maximum number of voters possible within budget.

Let $\tilde{c}'_{1} = [\alpha_1,\alpha_2,\ldots,\alpha_d]$ be a point within $\epsilon$ from $c_1$ that wins maximum number of voters possible. For any voter $v_r \in \mathcal{V}$ that votes for $\tilde{c}'_{1}$, assume $\max(v_r,c_2) = |v_{r,k} - a_k|$. To start with, we set $\tilde{c}_1 = \tilde{c}'_{1}$, then we have $\max(v_r,\tilde{c}_1) = \max ({|v_{r,1}-\alpha_1|},\ldots,{|v_{r,d}-\alpha_d|)}$. Notice that since $|v_{r,k} - a_k|\geq |v_{r,j} - a_j|$, $\forall j \in[d]$, and $|a_j|\leq \epsilon$, $\forall j \in[i]$, we can set $\tilde{c}_{1,j} = a_{j}$ for $ j \in[i]$ and this has no impact on $\tilde{c}_1$ winning voter $v_r$. Now $\tilde{c}_1 = [a_1,\ldots,a_i,\alpha_{i+1},\ldots,\alpha_d]$. Next we only need to discuss the values of $\tilde{c}_{1,j}$ for ${j\in\{i+1,\ldots,d\}}$. We show that $|v_{r,k} - a_k| \geq |v_{r,j} -\sign(a_j)\epsilon|$ for all ${j\in\{i+1,\ldots,d\}}$. Due to the budget constraint to $\tilde{c}'_{1}$, we must have ${-\epsilon \leq \alpha_j \leq \epsilon}, \forall j \in [d]$. For $\forall j\in\{i+1,\ldots,d\}$ we have:
\begin{itemize}
    \item if $a_j >0$:
\begin{itemize}
    \item if $v_{r,j}\geq \epsilon \geq \alpha_j$, then $|v_{r,k} - a_k| \geq |v_{r,j} -\alpha_j| = {v_{r,j} -\alpha_j }\geq v_{r,j} -\epsilon ={ |v_{r,j} -\epsilon|}$;
    \item if $\epsilon \geq v_{r,j} \geq \alpha_j$ or $\epsilon \geq \alpha_j \geq v_{r,j} $, then since ${a_j > \epsilon \geq v_{r,j}}$, we have $|v_{r,k} - a_k| \geq |v_{r,j}-a_j| = a_j - v_{r,j} > \epsilon -v_{r,j}= |v_{r,j} -\epsilon|$.
\end{itemize}
    \item if $a_j <0$: 
    \begin{itemize}
        \item if $\alpha_j\geq -\epsilon \geq  v_{r,j}$, then $|v_{r,k} - a_k| \geq |v_{r,j} -\alpha_j| = {\alpha_j - v_{r,j}} \geq  -\epsilon-v_{r,j} = |v_{r,j} +\epsilon|$;
        \item if $\alpha_j\geq v_{r,j} \geq-\epsilon$ or $v_{r,j} \geq \alpha_j\geq -\epsilon$, then since $v_{r,j} \geq -\epsilon > a_j$, we have $|v_{r,k} - a_k| \geq |v_{r,j}-a_j| = v_{r,j}-a_j > v_{r,j} + \epsilon= |v_{r,j} +\epsilon|$.
    \end{itemize}
\end{itemize}

This means %
${\tilde{c}_1 = [a_1,\ldots,a_i,\sign(a_{i+1})\epsilon,\ldots,\sign(a_d)\epsilon]}$ wins voter $v_r$. Since $\tilde{c}'_1$ wins maximum number of voters possible, $\tilde{c}_1$ also wins maximum number of voters possible. Next we only need to check whether $\tilde{c}_1$ wins the election (i.e., wins at least half of the voters).

The computation of $\tilde{c}_1$ takes $O(d)$ time;  checking whether $\tilde{c}_1$ wins the election takes $O(m d)$ time. The algorithm takes $O(m d)$ time in total. 

\subsection{Proof of Theorem~\ref{T:rvpm_linf_cd}}

We solve the problem by finding the set which contains a representative solution point for all scenarios of $(t_1,\ldots,t_m)$ (defined as in Theorem~\ref{T:rvpm_c_ci}), which covers all the scenarios of $c_1$ winning. Each scenario represents finding a $\tilde{c}_1$ within budget constraint that gets a score of at least $f(t_j)$ from voter $v_j, j \in [m]$, and is to solve the below feasibility problem:
\begin{subequations}
\begin{align}
    &||\tilde{c}_1 - c_1||_\infty \leq \epsilon\\
    &||\tilde{c}_1 - v_j||_\infty \leq d_j^{t_j}, \quad j\in[m]
\end{align}
\end{subequations}
where $d_{j}^{t_j}$ is defined in the constructive control variant of \textsc{BVPM}.  There are in total $m\cdot |f_{\text{uniq}}|+1$ hypercubes involved for all the scenarios. We hope to find the representative points of intersections for those $m\cdot |f_{\text{uniq}}|+1$ hypercubes which then cover all the scenarios of ${c}_1$ winning. As discussed in \citeauthor{Crama95}~\shortcite{Crama95}, the problem goes down to box intersection problem for constant $d$. For a problem that involves $n$ boxes, it can be solved in time $O(n\log n)$ for $d\leq 2$ and $O(n^{d-1})$ for $d\geq 3$ \cite{Imai83,Lee83}. The resulting algorithm is exponential only in $d$.

\subsection{Proof of Lemma~\ref{avoid_ball}}
We denote $S_c$ as the boundary of $B_c$, and $S_j$ as the boundary of both closed ball $B_j$ and open ball $\mathring{B}_j$, $j\in[k]$.

As defined in Theorem 5 of \citeauthor{Crama95}~\shortcite{Crama95}, given $\{B_c,B_1,\ldots,B_k\}$ in $\mathbb{R}^d$, let $P$ be a set of points in $\mathbb{R}^d$ where for each $F\subseteq \{B_c,B_1,\ldots,B_k\}$ with $|F|\leq d$,
\begin{enumerate}[label=(\roman*)]
    \item if $\bigcap_{j \in F} S_j$ is connected, then $P$ contains a point of $\bigcap_{j \in F} S_j$, and
    \item if $\bigcap_{j \in F} S_j$ contains at most two points, then $P$ contains $\bigcap_{j \in F} S_j$.
\end{enumerate}

Assume the solution set $P'\neq \emptyset$. We chose $x\in P'$ as a representative solution. Let ${H = \{j \in \{1,\ldots,k\}\mid x\in \mathring{B}_{{j}}\}}$, $F = \{j \in \{c,1,\ldots,k\}\mid x \in S_j\}$. Let $x$ be chosen so that first $|H|$ is as small as possible, then $|F|$ is as large as possible. According to Lemma 5 in \citeauthor{Crama95}~\shortcite{Crama95}, the intersections of spheres are either connected or contain at most two points. There are four cases to discuss:

Case 1: $\bigcap_{j\in F} S_j$ is connected, and $|F| < d$. By construction of $P$, there exists a point $u \in P \cap (\bigcap_{j\in F} S_j)$. We next claim that $u \in B_c$ and for all $j \in \{1,\ldots,k\}\backslash H, u \notin \mathring{B}_j$. Since $x$ is a solution, we have $\{i_1,\ldots,i_r\}\cap H = \emptyset$, this also implies $u \notin \mathring{B}_{j}, j \in \{i_1,\ldots,i_r\}$. Since $\bigcap_{j\in F} S_j$ is connected and both $x, u \in \bigcap_{j\in F} S_j$, there is a path from $x$ to $u$ on $\bigcap_{j\in F} S_j$. If the claim is not valid, then moving from $x$ to $u$ we must encounter a first boundary of the ball $S_j$ with index $j \in \{c,1,\ldots,k\}\backslash F$. Let $v$ be the point that the path $x$ to $u$ intersects with $S_j$. If $j\in H$, this contradicts the minimality of $H$, since $v$ is in fewer balls than $x$; if $j\notin H$, this contradicts the maximality of $F$, since $v$ is on more spheres than $x$.

Case 2: $\bigcap_{j\in F} S_j$ is connected, and $|F| \geq d$. According to Lemma 5 in \citeauthor{Crama95}~\shortcite{Crama95}, there exists $F', |F'|< d$, such that $\bigcap_{j\in F} S_j = \bigcap_{j\in F'} S_j$. This goes to case 1.

Case 3: $\bigcap_{j\in F} S_j$ contains at most two points, $|F| \leq d$. By construction, $P$ contains $\bigcap_{j\in F} S_j$, and $x\in \bigcap_{j\in F} S_j \subseteq P$.

Case 4: $\bigcap_{j\in F} S_j$ contains at most two points, $|F| > d$. We take $F'\subseteq F, |F'|=d$. Since the intersections of $d$ non-coinciding spheres have at most two points, by construction $P$ contains $\bigcap_{j\in F'} S_j$, and $x \in \bigcap_{j\in F} S_j \subseteq \bigcap_{j\in F'} S_j \subseteq P$.

The proof is modified based on Lemma 3 in \citeauthor{Crama97}~\shortcite{Crama97}, where it studies the avoid ball problem in $\mathbb{R}^d$ for constant $d$ and the feasibility area is bounded by a $d$-dimensional hypercube.

\subsection{Proof of Lemma~\ref{non_convex}}

We claim that if $\tilde{y}\in \mathbb{R}^d$ satisfy constraint (\ref{budget}) and $\cup_{j=1}^d S_j(\tilde{y}_j) = [k]$, then $\tilde{y}$ is a solution to the feasibility problem. If the claim is not true, assume the $l$-th constraint of (\ref{cube_c}) is not satisfied for some $l\in[k]$, meaning $||\tilde{y}-a_l||_\infty <b_l$, this implies ${|\tilde{y}_j -a_{l,j}|<b_l}$, $\forall j\in[d]$ and $l\neq S_j(\tilde{y}_j),\forall j \in [d]$, which contradicts $\cup_{j=1}^d S_j(\tilde{y}_j) = [k]$.

Given the above result, we next show for a solution $\tilde{y}$, its $j$-th dimension coordinate $\tilde{y}_j\in[-\epsilon +y_j,\epsilon +y_j]$ can be represented by one of the points in $P_j$. Notice that for any $i$-th constraint in (\ref{cube_c}), ${i\in[k]}$, its $j$-th dimension ${|\tilde{y}_j-a_{i,j}| < b_i}$ depicts an open interval on $\mathbb{R}$: ${-b_i +a_{i,j} <\tilde{y}_j<b_i +a_{i,j}}$. The \mbox{$j$-th} dimension of $k$ constraints correspond to $k$ open intervals. Since the intervals are open, $\tilde{y}_j$ and its nearest open interval endpoint ${\tilde{y}'_j\in[-\epsilon +y_j,\epsilon +y_j]}$ satisfy ${S_j(\tilde{y}_j)=S_j(\tilde{y}'_j)}$. If there is no open interval endpoint within ${[-\epsilon +y_j,\epsilon +y_j]}$, then any $\tilde{y}'_j$ that is within ${[-\epsilon +y_j,\epsilon +y_j]}$ satisfies $S_j(\tilde{y}_j)=S_j(\tilde{y}'_j)$  (e.g., ${\tilde{y}'_j = y_j)}$. This means ${\cup_{i=1}^j S_i(\tilde{y}_i)\cup S_j(\tilde{y}'_j) \cup_{i=j+1}^d S_i(\tilde{y}_i) = [k]}$ and ${[\tilde{y}_1,\ldots,\tilde{y}_{j-1},\tilde{y}'_j,\tilde{y}_{j+1},\ldots,\tilde{y}_d]}$ satisfies constraint (\ref{budget}), it is a solution to the feasibility problem. Since $P_j$ contains all such $\tilde{y}'_j, \forall j \in [d]$, if the problem is feasible then ${P=\{p\in \mathbb{R}^d\mid p_j\in {P}_j, j \in [d]\}}$ contains a solution.

\section{Algorithm for Destructive Control in Binary Issues with a Constant Number of Voters}
\label{S:supp_binary_destructive}

For destructive control, we change the sign ``$\leq$" in constraint (\ref{dist}) to ``$\geq$". Similar to constructive control, given an arbitrary scoring function $f$ that has $r$ unique values (${|f_{\text{uniq}}|=r}$), since $f$ is non-increasing, we can partition the domain of $f$ by ${1 = s_1 < s_2 < \dots < s_r < s_{r+1} = n+1}$, so that $\{f(k)\}_{k=s_i}^{s_{i+1}-1}$ have the same value, $i \in \{1,2, \ldots,r\}$. This means $s_{i}$ is the highest ranking position that corresponds to score $f(s_{i})$ and $\{f(s_i)\}_{i=1}^r$ contains all the unique values of $f$.

Next we solve the problem for each $(t_1,\ldots,t_m)$ scenario with $t_j \in\{s_1,\ldots,s_r\}, \forall j \in [m]$. For each voter $v_j, {j \in[m]}$, we rank candidates $c_i ~(i \geq2)$ by their distances to voter $v_j$ from closest to furthest, and use $d_j^{t_j}$ to denote the distance between $v_j$ and the candidate ranked \mbox{$(t_{j}-1)$-th} closest to it. Since the tie breaks in the adversary's favor, as long as $c_1$'s distance to $v_j$ is at least $d_j^{t_j}$, $c_1$ will receive a score of no more than $f(t_j)$ from $v_j$. Notice that since the rankings of $c_i~(i\geq 2)$ do not include $c_1$, only $d_j^{t_j}$ for $2\leq t_j\leq n$ are properly defined. For $t_{j} = s_1= 1$, we let $d_j^{1} = 0$, since $c_1$ is guaranteed to get no more than the highest score $f(1)$. 

The correctness for the above arguments for destructive control can be demonstrated by the below two lemmas similar to constructive control.

\begin{lemma}
For destructive control, if a ranking position $(r_1,r_2,\ldots,r_m)$ is feasible for ${c}_1$ under the budget constraint and lets ${c}_1$ lose the election, then for a ranking position $(r'_1,r'_2,\ldots,r'_m)$ that is feasible with $r'_j\geq r_j, \forall j\in [m]$, it will also let ${c}_1$ lose the election.
\end{lemma}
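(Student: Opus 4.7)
The plan is to mirror the proof of Lemma~\ref{constructiverank} with the inequalities reversed, exploiting the duality between constructive and destructive control. First I would fix a voter $v_j$ and, as in the earlier proof, rank the rival candidates $c_2, c_3, \ldots, c_n$ from closest to farthest w.r.t. $v_j$. Inserting $c_1$ at position $r_j$ in this sequence assigns to $c_2, \ldots, c_{r_j}, c_1, c_{r_j+1}, \ldots, c_n$ the scores $f(1), \ldots, f(r_j-1), f(r_j), f(r_j+1), \ldots, f(n)$, respectively.

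Next I would analyze the effect of moving $c_1$ from rank $r_j$ down to $r'_j \geq r_j$ at this single voter. The change decomposes into three effects: (i) $c_1$'s contribution from $v_j$ drops from $f(r_j)$ to $f(r'_j) \leq f(r_j)$ since $f$ is non-increasing; (ii) candidates whose positions lie in $[1, r_j - 1] \cup [r'_j + 1, n]$ after the move are unaffected; (iii) the candidates previously at positions $r_j+1, \ldots, r'_j$ each shift up by one rank, so their scores change from $f(r_j+1), \ldots, f(r'_j)$ to $f(r_j), \ldots, f(r'_j-1)$, each weakly larger by monotonicity of $f$. Hence $c_1$'s score from $v_j$ weakly decreases while no rival's score from $v_j$ decreases. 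Applying this observation coordinate-by-coordinate across $j \in [m]$, after moving from $(r_1, \ldots, r_m)$ to $(r'_1, \ldots, r'_m)$ the total scores satisfy $s'_1 \leq s_1$ and $s'_i \geq s_i$ for every $i \geq 2$.

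The final step is to invoke the tie-breaking convention: in destructive control the adversary wants $c_1$ to lose, so ties are broken against $c_1$, meaning $c_1$ loses at $(r_1,\ldots,r_m)$ iff some rival $c_i$ satisfies $s_i \geq s_1$. Combining with the monotonicity above yields $s'_i \geq s_i \geq s_1 \geq s'_1$, so that same rival still witnesses $c_1$'s loss at $(r'_1,\ldots,r'_m)$. I do not expect a genuine obstacle here: the argument is the natural dual of Lemma~\ref{constructiverank}, and the only care required is tracking the direction of tie-breaking (now working against $c_1$) and being consistent with the monotonicity direction of $f$ throughout.
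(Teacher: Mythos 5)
Your proposal is correct and follows essentially the same argument as the paper's proof: fix a voter, track how demoting $c_1$ from rank $r_j$ to $r'_j$ weakly decreases $c_1$'s score while weakly increasing the scores of the candidates shifted up by one, then sum over voters and apply the (adversary-favoring, i.e.\ anti-$c_1$) tie-breaking. No gaps.
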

\begin{proof}
For each voter $v_j, j \in [m]$, we rank candidates $c_i ~({i\geq 2})$ by their distances to $v_j$ from closest to furthest. %
Without loss of generality, we assume candidates $c_n,c_{n-1},\ldots,c_2$ are ranked from closest to furthest w.r.t. $v_j$. Then when $c_1$ is inserted into this sequence and is ranked $r_j$, the score each candidate receives from $v_j$ is as below:

\begin{table}[H]
\centering
\begin{tabular}{|c|c|c|c|c|c|c|}
\hline
$c_{n}$ & ... & $c_{n-r_j+2}$   & $c_1$ & $c_{n-r_j+1}$     & ... & $c_{2}$ \\
\hline
$f(1)$ & ... & $f(r_j-1)$ & $f(r_j)$ & $f(r_j+1)$ &  ...  & $f(n)$ \\
\hline
\end{tabular}
\end{table}

If we move $c_1$ from ranking position $r_j$ to $r'_j$, ${r'_j \geq r_j}$, since $f$ is a non-increasing function, the score $c_1$ receives from $v_j$ will decrease from $f(r_j)$ to $f(r'_j)$; the scores candidates $c_{n},c_{n-1},\ldots,c_{n-r_j+2}$ and $c_{n-r'_j+1},c_{n-r'_j},\ldots,c_{2}$ receive from $v_j$ will not change; for candidates $c_{n-r_j+1},c_{n-r_j},\ldots,c_{n-r'_j+2}$, since their rankings will increase by $1$, the scores they receive from $v_j$ will not decrease. This means if we move $c_1$ from ranking position $(r_1,r_2,\ldots,r_m)$ to $(r'_1,r'_2,\ldots,r'_m)$, ${r'_j \geq r_j}, \forall {j \in [m]}$, the total score $c_1$ receives will not increase, while the total scores other candidates receive will not decrease. Since $c_1$ loses the election with ranking position $(r_1,r_2,\ldots,r_m)$, $c_1$ will also lose the election with ranking position $(r'_1,r'_2,\ldots,r'_m)$.
\end{proof}

\begin{lemma}
For destructive control, by enumerating all scenarios of ${c}_1$ getting a ranking position lower than $t_j$ w.r.t. voter $v_j$ for $t_j \in\{s_1,\ldots,s_r\},\forall j\in [m]$, we cover all the possible scenarios of ${c}_1$ losing.
\end{lemma}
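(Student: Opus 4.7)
The plan is to mirror the proof of Lemma~\ref{constructiverep}, adapted to the destructive partition $1 = s_1 < s_2 < \dots < s_r < s_{r+1} = n+1$ on which $f$ is constant within each block $[s_i, s_{i+1}-1]$. The central observation is that sliding $c_1$'s rank inside a single block is invisible to the whole score vector, so only one representative rank per block needs to be enumerated; taking the smallest index $s_i$ yields the weakest ``rank at least $t_j$'' constraint and hence the largest feasible set of $\tilde{c}_1$.

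First I would fix a voter $v_j$ and suppose $c_1$ is inserted at rank $r_j \in [s_i, s_{i+1}-1]$ into the ordering of the rival candidates by distance from $v_j$. A short case split on a rival's pre-insertion rank $k$ confirms that every candidate's score from $v_j$ is invariant in $r_j$ across the block: rivals with $k$ strictly below the block always keep rank $k$; rivals with $k$ strictly above the block are always shifted to $k+1$; and rivals with $k \in [s_i, s_{i+1}-1]$ may toggle between $k$ and $k+1$ as $r_j$ varies, but both of these ranks lie in the $f(s_i)$-block and so yield the same score. Meanwhile $c_1$ itself receives $f(r_j) = f(s_i)$.

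Second, I would take any losing solution $\tilde{c}_1$ with actual rank vector $(r_1, \dots, r_m)$ and define $t_j = s_{i_j}$, where $i_j$ is the unique index with $r_j \in [s_{i_j}, s_{i_j+1}-1]$. The equivalence established in the previous step shows that the score vector under the enumerated scenario $(t_1, \dots, t_m)$ agrees with the score vector under $(r_1, \dots, r_m)$, so $c_1$ still loses in that scenario. Since $r_j \geq t_j$, the budget-feasible $\tilde{c}_1$ meets the rank-lower-bound constraint the enumeration imposes, and the preceding destructive monotonicity lemma (``if $(r_1,\dots,r_m)$ loses, so does any $(r'_1,\dots,r'_m)$ with $r'_j \geq r_j$'') confirms that even stricter bounds preserve losing. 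Hence every losing outcome is captured by some enumerated $(t_1, \dots, t_m) \in \{s_1, \dots, s_r\}^m$.

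The main obstacle I anticipate is purely notational: keeping ``lower rank'' consistently synonymous with ``worse score,'' and translating the constructive partition (blocks $[s_i+1, s_{i+1}]$ enumerated at the right endpoint) into the destructive partition (blocks $[s_i, s_{i+1}-1]$ enumerated at the left endpoint) without an off-by-one slip. No non-trivial calculation arises once the case split on a rival's position relative to the block $[s_i, s_{i+1}-1]$ is laid out cleanly.
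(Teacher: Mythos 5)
Your proof is correct and takes essentially the same route as the paper's: the score vector is invariant as $c_1$'s rank slides within a block $[s_i,\,s_{i+1}-1]$ on which $f$ is constant, so enumerating only the left endpoints $s_i$ suffices, with the destructive monotonicity lemma covering any feasible point of even lower rank. One minor imprecision: a rival with pre-insertion rank $k=s_{i+1}-1$ never actually toggles---since $r_j\le k$ it is always displaced to rank $s_{i+1}$, which lies outside the $f(s_i)$-block---but its post-insertion rank is then constant as $r_j$ varies over the block, so your invariance conclusion still holds.
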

\begin{proof}
Without loss of generality, we assume candidates $c_n,c_{n-1},\ldots,c_2$ are ranked from closest to furthest w.r.t. voter $v_j$. We next insert $c_1$ into this sequence and assume $c_1$ receives a score of $f(s_{i})$ from $v_j$, where $i\in\{1,\ldots,r\}$. After the insertion, the score each candidate receives from $v_j$ is as below:

\begin{table}[H]
\centering
\begin{tabular}{c||c||c||c}
\hline
$...,c_{n-s_i+2}$ & $... ,c_1, ... $ & $ c_{n-s_{i+1}+2}, ...,c_{n-s_{i+2}+1}$ &\\
\hline
$f(s_{i-1})$ & $f(s_{i})$ & $f(s_{i+1})$ &\\
\hline
\end{tabular}
\end{table}

Notice that whichever ranking position $c_1$ takes among $s_i ,{s_i +1},\ldots,s_{i+1}-1$, the final score each candidate receives from $v_j$ are the same, meaning these ranking positions are equivalent. 
Since the solution set of getting a ranking position lower than $s_{i}$ covers the solution set of getting a ranking position lower than $s_i +1,s_i +2,\ldots,s_{i+1}-1$, we only need to consider $s_{i}$.
\end{proof}

\section{Algorithm in Lemma \ref{non_convex_poly}}
\label{algo_1_des}

\begin{algorithm}
\caption{Feasibility problem with constant number of constraints}
\renewcommand{\algorithmicrequire}{\textbf{Input:}}
\renewcommand{\algorithmicensure}{\textbf{Output:}}
\begin{algorithmic}[1]
\REQUIRE Feasibility problem parameters.
\ENSURE A solution to the problem; or NO if not feasible.
\STATE Calculate $\mathcal{S}_1$ and $P_1$ \label{init}
\STATE $\mathcal{S}\leftarrow\mathcal{S}_1$, $P\leftarrow P_1$ \label{init_sets}
\FOR{$j \in [2:d]$}
\STATE Calculate $\mathcal{S}_j$ and $P_j$
\STATE Find all pairs of sets $S(p)\in \mathcal{S}$ and $S_j(p_j)\in \mathcal{S}_j$, where $S_j(p_j) \nsubseteq S(p)$. %
\STATE Add $S(p) \cup S_j(p_j)$ to $\mathcal{S}$ and $[p,p_j]$ to ${P}$
\IF{$[k]\in \mathcal{S}$} \label{terminate_k}
\STATE Expand its corresponding point $p$ to $d$-dimension
\RETURN $p$
\ENDIF
\STATE Remove subsets in $\mathcal{S}$ and corresponding points in $P$ \label{remove_subset}
\STATE Update $P$ so that all points have dimension $j$ \label{update_p_j}
\ENDFOR
\RETURN NO
\end{algorithmic}
\label{feas_poly}
\end{algorithm}

For simplicity, we first define set ${\mathcal{S}_j = \{S_j(p_j)\mid p_j \in P_j\}}$, where $P_j$ and $S_j(p_j)$ are defined in Lemma \ref{non_convex}. Assume a point $p$ is $j$-dimensional, we let $S(p) = \cup_{i = 1}^j S_i(p_i)$. According to Lemma \ref{non_convex}, finding a solution to the feasibility problem is equivalent to finding a point $p\in\mathbb{R}^d$ that satisfy constraint (\ref{budget}) and $S(p)=[k]$.

In Line \ref{init} and \ref{init_sets}, we initialize the sets $\mathcal{S}$ and $P$ with $\mathcal{S}_1$ and $P_1$, and later use them to store the results for the $[1:j]$ dimension. 
In the main loop, for each dimension $j\in [2:d]$, we first calculate sets $\mathcal{S}_j$ and $P_j$, and then find all pairs of sets $S(p)\in \mathcal{S}$ and $S_j(p_j)\in \mathcal{S}_j$ that satisfy $S_j(p_j) \nsubseteq S(p)$. This means $S(p)\subset S([p,p_j])$. We add all such sets $S([p,p_j])$ to $\mathcal{S}$ and points $[p,p_j]$ to $P$. The terminate condition is if $[k]\in \mathcal{S}$, then for the $j$-dimensional point $p\in P$ that has $S(p)=[k]$, $S([p,y_{j+1},\ldots,y_d])=[k]$ and $[p,y_{j+1},\ldots,y_d]$ satisfies constraint (\ref{budget}). $[p,y_{j+1},\ldots,y_d]$ is a solution to the feasibility problem.

In Line \ref{remove_subset} we remove all subsets in $\mathcal{S}$ (as well as the corresponding points in $P$) and keep the sets in $\mathcal{S}$ pairwise incomparable. This improves our algorithm efficiency and has no impact on us finding a solution: %
if a solution $\tilde{y}$ exists with ${\cup_{i=1}^d S_i(y_i) = [k]}$, and we have $y'_j \in {[-\epsilon +y_j,\epsilon +y_j]}$ with ${S_j(y_j) \subseteq S_j(y'_j)}$, we then have ${\cup_{i=1}^{j-1} S_i(y_i) \cup S_j(y'_j) \cup_{i =j+1}^d S_i(y_i) = [k]}$, meaning $[y_1,\ldots,y_{j-1},y'_j,y_{j+1},\ldots,y_d]$ is also a solution. 

In Line \ref{update_p_j}, we append $y_j$ to all the $(j-1)$-dimensional points in $P$ so that all points are $j$-dimensional.

Since $k$ is constant, the sizes of $\mathcal{S}$ and $\mathcal{S}_j$ as well as their elements are all constants. The algorithm is linear and of complexity $O(d)$. We can also determine in linear time if none of the representative points satisfy the feasibility condition, and return NO in that case.
\end{document}